\documentclass[12pt]{iopart}

\usepackage{iopams}  
\usepackage{slashbox}
\usepackage{multirow}
\usepackage{pstricks}

\def\slantfrac#1#2{\hbox{$\,^#1\!/_#2$}}

\newcommand{\demi}{\frac{1}{2}}
\newcommand{\prm}{PRB}
\newcommand{\prms}{\prm{}s}
\newcommand{\eprm}{$\varepsilon$-\prm}
\newcommand{\eprms}{\eprm{}s}

\newcommand{\dprm}{$\delta$-\prm}
\newcommand{\dprms}{\dprm{s}}

\newcommand{\modify}[1]{\normalfont{#1}}
\newcommand{\pns}[1]{P_{\modify{ns},{#1}}}
\newcommand{\pnse}{P_{\modify{ns}}}
\newcommand{\ploc}{P_{\modify{local}}}
\newcommand{\pld}[1]{P_{\modify{ld,{#1}}}}

\newcommand*{\ep}{\varepsilon}

\newcommand{\st}{\mathrm{s.t.}}

\newtheorem{definition}{Definition}
\newtheorem{lemma}{Lemma}
\newtheorem{proposition}{Proposition}
\newenvironment{proof}{\noindent{\em Proof.}}{}

\newtheorem{theorem}{Theorem}

\newcommand{\binom}[2]{{{#1} \choose {#2}}}

\newcommand{\qqed}{\hfill$\Box$}

\begin{document}

\title{The non-locality of $n$ noisy Popescu-Rohrlich boxes}

\author{\small{Matthias Fitzi$^1$, Esther H\"anggi$^1$, 
Valerio Scarani$^{2}$ and Stefan Wolf$^1$}}

\address{$^1$ Computer Science Department, ETH Zurich, Switzerland}
\address{$^2$ Centre for Quantum Technologies and Department of Physics, National University of Singapore, Singapore}
\ead{esther.haenggi@inf.ethz.ch}

\begin{abstract}
We quantify the amount of {\em non-locality\/} contained in $n$ noisy versions of so-called
{\em Popescu-Rohrlich  boxes (PRBs)}, i.e., bipartite systems violating the CHSH Bell inequality 
maximally. Following the approach by Elitzur, Popescu, and Rohrlich, we measure the amount of non-locality
of a system by representing it as a convex combination of a {\em local\/} behaviour, with
maximal possible weight, and a {\em non-signalling\/} system. We show that the local part
of $n$ systems, each of which approximates a PRB with probability $1-\ep$, is of order $\Theta(\ep^{\lceil n/2\rceil})$
in the isotropic, and equal to $(3\ep)^n$ in the maximally biased case.
\end{abstract}

\section{Introduction}
The behaviour of a bipartite input/output system $P_{XY|UV}$ is
{\em non-local\/} if it cannot be obtained from pre-shared information.
For example, the measurement-choice/outcome behaviour of certain
{\em entangled\/} quantum states is non-local in this sense, as first proved by Bell \cite{bellInequality}. This notion of non-locality has applications in device-independent quantum cryptography \cite{kent,acin,masanesc,ec}, randomness extraction \cite{colbeck,pironio}, and state estimation~\cite{bardyn}. Non-local correlations 
can also be seen as a resource to fulfill distributed tasks~\cite{vandam}.

There is not a unique way of quantifying non-locality: For instance, the efficiency of a resource in performing any of the above-mentioned tasks can be taken as a measure of non-locality with operational meaning. Colbeck and Renner, for example, quantify the non-locality of a system by the possibility of giving biased outputs as seen from an adversaries point of view~\cite{colren} and show that quantum theory has \emph{no local part} in this sense. A different measure is based on an idea by Elitzur, Popescu, and Rohrlich \cite{EPR2}. It consists of partitioning the probability distribution into a local part of maximal weight and the remaining non-local part. So far, this measure of non-locality has been studied on a few examples of systems that can be obtained by measurements on quantum states \cite{bkp,scarani,branciard}.
We will see that this measure yields lower bounds  
in contexts where a system is used to realize an information-theoretic task,
and where the success and strength of this simulation depends on the amount of 
(non-)locality in the system. Since (noisy) PR boxes are a simple and natural type of 
non-local systems easily obtained in nature, we quantify the non-locality of a number 
of them by this measure. 

 The {\em Popescu-Rohrlich box (PRB)\/} is a hypothetical device that, on binary inputs $X$ and $Y$, produces random binary outputs $U$ and $V$ such that $X\oplus Y=U\cdot V$ \cite{pr,rastall,khalfin}. Imperfect (or noisy) \prms{} are probability distributions that fulfill 
this condition only with a certain probability, i.e.,  $Pr[X\oplus Y=U\cdot V]=1-\ep$ for random inputs. For an imperfect \prm{} achievable with quantum resources the error $\ep$ must be at least $\sim 15 \%$~\cite{tsirelson}; if only classical resources are available, it must be at least $25\%$.

In this paper, we address the following question: How much does non-locality, measured by decomposing the system into a \emph{local} and a \emph{non-local} part, increase if one has $n$ copies of an imperfect \prm{}? We prove that the {\em local\/} part decreases exponentially in the number of copies. More precisely, the local part
of $n$ systems, each of which approximates a PRB with probability $1-\ep$, is in $\Theta(\ep^{\lceil n/2\rceil})$
in the isotropic and $(3\ep)^n$ in the maximally biased case.
Our result is closely related to  the distillability of \prms{} and of no-signalling resources in general \cite{short,zurich,brunner}, 
to device-independent QKD~\cite{masanesc,ec}.

\section{Definitions}

A {\em bipartite input-output  system\/}  takes an input and yields
an output from a well-defined alphabet on each side (i.e., to each {\em party\/}) and
is completely characterized by a {\em conditional probability distribution\/}
$P_{XY|UV}(x,y,u,v)$, where $U$ and $V$ are the inputs, and $X$ and $Y$ are the
outputs, respectively.
We restrict our considerations to {\em bipartite\/} systems; however,  generalizations
to more parties are possible. 

All distributions we consider in this paper are {\em non-signalling}, i.e., do not allow for
message transmission. The existence of systems not having this property would be in sharp
contrast to relativity theory as soon as the two inputs could be given, and the outputs obtained,
in a space-like separated fashion.

\begin{definition}
{\rm
 A bipartite conditional probability distribution $P_{XY|UV}(x,y,u,v)$ is called
 \emph{non-signalling} if the two parties cannot use it to transmit information,
 i.e., 
 \begin{eqnarray}
\nonumber  \sum_x P_{XY|UV}(x,y,u,v) &= \sum_x P_{XY|UV}(x,y,u',v)\ \mbox{\ for all\ }  y,v\ ,\\
\nonumber  \sum_y P_{XY|UV}(x,y,u,v) &= \sum_y P_{XY|UV}(x,y,u,v')\ \mbox{\ for all\ }  x,u\ .
 \end{eqnarray}
}
\end{definition}

The space of all non-signalling probability distributions of a certain
input/output alphabet is a convex polytope. 

\begin{definition}
{\rm
A non-signalling probability distribution is \emph{local deterministic} if it can
be written as
 \begin{equation}
 \nonumber
  P_{XY|UV}=\delta_{x,f(u)}\cdot \delta_{y,g(v)}\ ,
 \end{equation}
where $f:U\rightarrow X$ and $g:V\rightarrow Y$ are deterministic functions
mapping from the set of inputs to the set of outputs and $\delta$ is the Kronecker symbol
defined by $\delta_{xy}:=1$ if $x=y$ and $\delta_{xy}:=0$ otherwise. A non-signalling probability distribution is \emph{local} if it is a convex
combination of local deterministic probability distributions. 
}
\end{definition}

Intuitively, local determinism  means  that each output is uniquely determined by simply the input on this side.
All {\em local\/} probability distributions can be simulated by two distant parties 
with shared randomness. The latter  indicates
 which local deterministic probability distribution to use. Altogether,  the output is
then a deterministic function of the randomness plus the  input on the same side. 

\begin{definition}{\bf (Elitzur, Popescu, Rohrlich~\cite{EPR2})}
{\rm
Given a bipartite non-signalling probability distribution $P_{XY|UV}$,
the maximum $p$, $0\leq p \leq 1$, such that the probability distribution can be
written as the convex combination of a local system, with weight $p$, and a non-signalling system,
of weight $1-p$, 
 is called its \emph{local part}:
 \begin{equation}
 \nonumber
  P_{XY|UV}=p\cdot\ploc+(1-p)\cdot\pnse\ .
 \end{equation}
}
\end{definition}

A probability distribution is local if and only if its local part is $1$.
In the special case of probability distributions taking binary inputs and
giving binary outputs, there is a simple inequality which can be used to determine
if a probability distribution is local. 
\begin{proposition}[Bell~\cite{bellInequality}]\label{bell}
 A bipartite probability distribution $P_{XY|UV}$ taking binary input and giving
 binary output is non-local if 
 \begin{equation}
 \nonumber
  Pr[X\oplus Y=U\cdot V]>0.75
 \end{equation}
 for uniform inputs. 
\end{proposition}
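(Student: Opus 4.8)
The plan is to prove the contrapositive: every \emph{local} distribution satisfies $Pr[X\oplus Y=U\cdot V]\le 0.75$ for uniform inputs, so that any distribution exceeding this value cannot be local and is hence non-local.

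First I would reduce the problem to local deterministic distributions. For uniform inputs,
\begin{equation}
\nonumber
Pr[X\oplus Y=U\cdot V]=\frac{1}{4}\sum_{u,v\in\{0,1\}}\ \sum_{x,y\,:\,x\oplus y=u\cdot v}P_{XY|UV}(x,y,u,v)\ ,
\end{equation}
which is a linear functional of $P_{XY|UV}$. Since, by definition, a local distribution is a convex combination of local deterministic ones, it suffices to establish the bound $3/4$ for a single local deterministic distribution $P_{XY|UV}=\delta_{x,f(u)}\cdot\delta_{y,g(v)}$; the general bound then follows by convexity.

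Next, for such a deterministic distribution the event $X\oplus Y=U\cdot V$ occurs on input pair $(u,v)$ exactly when $f(u)\oplus g(v)=u\cdot v$. I would then note that the four conditions
\begin{eqnarray}
\nonumber & f(0)\oplus g(0)=0,\quad f(0)\oplus g(1)=0, & \\
\nonumber & f(1)\oplus g(0)=0,\quad f(1)\oplus g(1)=1 &
\end{eqnarray}
cannot all hold simultaneously: summing the four left-hand sides modulo $2$, each of $f(0),f(1),g(0),g(1)$ occurs exactly twice and therefore cancels, so the sum of the left-hand sides is $0$, whereas the sum of the right-hand sides is $1$ --- a contradiction. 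Consequently at most three of the four input pairs can satisfy $f(u)\oplus g(v)=u\cdot v$, which gives $Pr[X\oplus Y=U\cdot V]\le 3/4$ for every local deterministic distribution and, by the reduction above, for every local distribution. Taking the contrapositive proves the proposition.

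I do not expect a genuinely hard step here. The only points that need a little care are the reduction to deterministic strategies --- which uses nothing beyond linearity of the above functional and the definition of a local distribution --- and the mod-$2$ bookkeeping in the four-equation argument. It is worth recording that the bound $3/4$ is tight (already attained by $f\equiv g\equiv 0$), which is why the inequality in the statement must be strict.
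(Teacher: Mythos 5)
The paper does not actually prove Proposition~\ref{bell}; it states the result and attributes it to Bell~\cite{bellInequality}, so there is no in-paper proof to compare against. Your argument is a correct and entirely standard derivation of the CHSH bound: the reduction to local deterministic strategies via linearity of the winning probability and convexity of the local set is sound, and the parity argument (summing the four conditions $f(u)\oplus g(v)=u\cdot v$ modulo $2$: the left-hand sides cancel to $0$ while the right-hand sides sum to $1$) cleanly shows at most three of the four input pairs can be satisfied, giving $Pr[X\oplus Y=U\cdot V]\le 3/4$ for every local distribution. Taking the contrapositive yields the stated proposition, and your remark on tightness ($f\equiv g\equiv 0$ attains $3/4$) correctly explains why the inequality must be strict. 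No gaps.
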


Note that up to relabelling of the inputs and outputs, the above condition is,
actually, \emph{equivalent} to non-locality. After~\cite{CHSH},
 the condition $X\oplus Y=U\cdot V$ is called the \emph{CHSH condition}. 
For the general case of  larger input and output alphabets,  Lemma~\ref{lemma:can_occur_with_p}
will be useful. 

\begin{lemma}\label{lemma:can_occur_with_p}
 Consider two non-signalling probability distributions $P_{XY|UV}$ and $\pns{1}$.
 Then the former  can be written as a convex combination of the latter  with
 weight $p$ and a  non-signalling probability distribution $\pns{2}$
with weight $1-p$, i.e., 
 \begin{equation}
 \nonumber
  P_{XY|UV}=p\cdot\pns{1}+(1-p)\cdot\pns{2}\ ,
 \end{equation}
 if and only if 
 \begin{equation}\label{eq:lem1cond}
 \nonumber
  p\cdot\pns{1}(x,y,u,v) \leq  P_{XY|UV}(x,y,u,v)\ \ \mbox{holds for all\ } x,y,u,v\ .
 \end{equation}
\end{lemma}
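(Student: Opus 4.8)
The plan is to prove both implications directly from the defining linear (in)equalities of non-signalling distributions. The whole content of the lemma is that the set of non-signalling distributions is cut out by linear equalities together with nonnegativity and normalization, so an affine combination of two of its points that remains nonnegative and normalized is again one of its points.

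For the \emph{only if} direction, suppose $P_{XY|UV}=p\cdot\pns{1}+(1-p)\cdot\pns{2}$ with $\pns{2}$ a non-signalling probability distribution. Then $\pns{2}(x,y,u,v)\geq 0$ for all $x,y,u,v$, hence $P_{XY|UV}(x,y,u,v)-p\cdot\pns{1}(x,y,u,v)=(1-p)\cdot\pns{2}(x,y,u,v)\geq 0$, which is exactly the claimed pointwise inequality. This needs no further work.

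For the \emph{if} direction, assume the inequality holds. If $p=1$ it forces $\pns{1}=P_{XY|UV}$ (both are normalized, so a pointwise inequality between them must be an equality), and we may take $\pns{2}:=\pns{1}$; so assume $p<1$ and define
\[
 \pns{2}(x,y,u,v):=\frac{1}{1-p}\bigl(P_{XY|UV}(x,y,u,v)-p\cdot\pns{1}(x,y,u,v)\bigr)\ .
\]
I then have to verify three things: (i) $\pns{2}$ is nonnegative, which is immediate from the assumed inequality and $1-p>0$; (ii) for every fixed input pair $(u,v)$, $\sum_{x,y}\pns{2}(x,y,u,v)=\frac{1}{1-p}(1-p\cdot 1)=1$, so $\pns{2}$ is a valid conditional probability distribution; (iii) $\pns{2}$ is non-signalling. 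For (iii) I use that the non-signalling conditions are \emph{linear} in the distribution: summing the definition of $\pns{2}$ over $x$ gives $\sum_x \pns{2}(x,y,u,v)=\frac{1}{1-p}\bigl(\sum_x P_{XY|UV}(x,y,u,v)-p\sum_x \pns{1}(x,y,u,v)\bigr)$, and since each term on the right is, by hypothesis, independent of $u$, so is the left-hand side; the symmetric computation summing over $y$ handles the other marginal. Hence $\pns{2}$ lies in the non-signalling polytope and yields the desired decomposition.

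I do not expect a genuine obstacle here; the only points demanding a moment's care are the degenerate case $p=1$ and checking that the single definition of $\pns{2}$ simultaneously respects all three requirements — nonnegativity, normalization, and the non-signalling equalities — rather than just the first. \qqed
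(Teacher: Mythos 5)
Your proof is correct and follows essentially the same route as the paper: define $\pns{2}$ as the rescaled difference, observe that nonnegativity comes from the assumed inequality while normalization and non-signalling follow from linearity, and argue the converse via nonnegativity of $\pns{2}$. The only differences from the paper's own proof are cosmetic refinements on your part — you handle the degenerate case $p=1$ explicitly and spell out the marginal computations behind the linearity claim, neither of which changes the substance.
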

\noindent
In particular, this holds if $\pns{1}$ is  local deterministic. 

\begin{proof} 
{\rm
Assume first that $p\cdot\pns{1}(x,y,u,v)\leq P_{XY|UV}(x,y,u,v)\ \mbox{\ for all\ }  x,y,u,v$.
Since both $P_{XY|UV}$ and $\pns{1}$, are normalized and non-signalling, 
\begin{equation}\nonumber
\pns{2}:=\frac{P_{XY|UV}-p\cdot \pns{1}}{1-p}
\end{equation}
is also
 normalized and non-signalling since both properties are linear. 
Finally, Condition~(\ref{eq:lem1cond}) implies that $\pns{2}$ is non-negative. 

 To see the reverse direction, assume $p\cdot\pns{1}(x,y,u,v)> P_{XY|UV}(x,y,u,v)$ for some
 $x,y,u,v$. Then $\pns{2}(x,y,u,v)<0$, and  $\pns{2}$ is not a probability distribution.\qqed
}
\end{proof}

\section{Isotropic \eprms{}}
\subsection{One Isotropic \eprm{}}
We now study the case of one single \eprm{} ($\varepsilon\in [0,0.25]$), i.e., a \prm{} that
fulfills the CHSH condition with probability $1-\varepsilon$ for each input pair, and for which the output bits
on both sides are unbiased, given the input pair. 

\begin{definition}
{\rm
 An \emph{isotropic \eprm{}} is a bipartite conditional probability distribution given by the
 following probability table.  
\begin{eqnarray}
\label{1box_unbiased_table}
\begin{array}{c c||c|c||c|c||}
$\backslashbox{V}{U}$& & \multicolumn{2}{c||}{0} & \multicolumn{2}{c||}{1} \\
 & $\backslashbox{Y}{X}$ & 0 & 1 & 0 & 1 \\ \hline\hline
\multirow{2}{*}{0} & 0 & \frac{1}{2}-\frac{\varepsilon}{2} & \frac{\varepsilon}{2} & \frac{1}{2}-\frac{\varepsilon}{2} & \frac{\varepsilon}{2} \\ \cline{2-6}
& 1 & \frac{\varepsilon}{2} & \frac{1}{2}-\frac{\varepsilon}{2} & \frac{\varepsilon}{2} & \frac{1}{2}-\frac{\varepsilon}{2} \\ \hline\hline
\multirow{2}{*}{1} & 0 & \frac{1}{2}-\frac{\varepsilon}{2} & \frac{\varepsilon}{2} & \frac{\varepsilon}{2} & \frac{1}{2}-\frac{\varepsilon}{2} \\ \cline{2-6}
& 1 & \frac{\varepsilon}{2} & \frac{1}{2}-\frac{\varepsilon}{2} & \frac{1}{2}-\frac{\varepsilon}{2} & \frac{\varepsilon}{2} \\ \hline\hline
\end{array}
\end{eqnarray}
We denote this probability distribution by $P^{1,\varepsilon}_{XY|UV}$ (for $1$ \eprm{}). 
}
\end{definition}
An isotropic \eprm{} can be seen as the convex combination of a perfect \prm{} and a
completely random bit:
\begin{equation}\nonumber
P^{1,\varepsilon}_{XY|UV} = 2\varepsilon\cdot P^{1,1/2}_{XY|UV}+(1-2\varepsilon)\cdot P^{1,0}_{XY|UV}\ .
\end{equation}
Note that the distribution of the random bit is completely local, i.e., its local
part is equal to $1$ while the perfect \prm{}'s local part is $0$.
However, the conclusion that the local part of $P^{1,\varepsilon}_{XY|UV}$ must be $2\varepsilon$
is wrong because $P^{1,\varepsilon}_{XY|UV}$ can be expressed as another convex combination
with higher local weight as follows. 
\begin{small}
\begin{eqnarray}\label{eq:best_dec_1_symm_box}
\nonumber
\fl P^{1,\varepsilon}_{XY|UV}
&=&
+\frac{\varepsilon}{2}\cdot
\begin{array}{c c||c|c||c|c||}
& & \multicolumn{2}{c||}{0} & \multicolumn{2}{c||}{1} \\
 &
& 0 & 1 & 0 & 1 \\ \hline\hline
\multirow{2}{*}{0} & 0 & 1 & 0 & 1 & 0 \\ \cline{2-6}
& 1 & 0 & 0 & 0 & 0 \\ \hline\hline
\multirow{2}{*}{1} & 0 & 1 & 0 & 1 & 0 \\ \cline{2-6}
& 1 & 0 & 0 & 0 & 0 \\ \hline\hline
\end{array}
+
\frac{\varepsilon}{2}\cdot
\begin{array}{c c||c|c||c|c||}
& & \multicolumn{2}{c||}{0} & \multicolumn{2}{c||}{1} \\
 &
& 0 & 1 & 0 & 1 \\ \hline\hline
\multirow{2}{*}{0} & 0 & 1 & 0 & 1 & 0 \\ \cline{2-6}
& 1 & 0 & 0 & 0 & 0 \\ \hline\hline
\multirow{2}{*}{1} & 0 & 0 & 0 & 0 & 0 \\ \cline{2-6}
& 1 & 1 & 0 & 1 & 0 \\ \hline\hline
\end{array}
\\ \nonumber
&&
+\frac{\varepsilon}{2}\cdot
\begin{array}{c c||c|c||c|c||}
& & \multicolumn{2}{c||}{0} & \multicolumn{2}{c||}{1} \\
 & 
& 0 & 1 & 0 & 1 \\ \hline\hline
\multirow{2}{*}{0} & 0 & 1 & 0 & 0 & 1 \\ \cline{2-6}
& 1 & 0 & 0 & 0 & 0 \\ \hline\hline
\multirow{2}{*}{1} & 0 & 1 & 0 & 0 & 1 \\ \cline{2-6}
& 1 & 0 & 0 & 0 & 0 \\ \hline\hline
\end{array}
+
\frac{\varepsilon}{2}\cdot
\begin{array}{c c||c|c||c|c||}
& & \multicolumn{2}{c||}{0} & \multicolumn{2}{c||}{1} \\
 & 
& 0 & 1 & 0 & 1 \\ \hline\hline
\multirow{2}{*}{0} & 0 & 0 & 1 & 1 & 0 \\ \cline{2-6}
& 1 & 0 & 0 & 0 & 0 \\ \hline\hline
\multirow{2}{*}{1} & 0 & 0 & 0 & 0 & 0 \\ \cline{2-6}
& 1 & 0 & 1 & 1 & 0 \\ \hline\hline
\end{array}
\\ \nonumber
&&+
\frac{\varepsilon}{2}\cdot
\begin{array}{c c||c|c||c|c||}
& & \multicolumn{2}{c||}{0} & \multicolumn{2}{c||}{1} \\
 & 
& 0 & 1 & 0 & 1 \\ \hline\hline
\multirow{2}{*}{0} & 0 & 0 & 0 & 0 & 0 \\ \cline{2-6}
& 1 & 1 & 0 & 0 & 1 \\ \hline\hline
\multirow{2}{*}{1} & 0 & 1 & 0 & 0 & 1 \\ \cline{2-6}
& 1 & 0 & 0 & 0 & 0 \\ \hline\hline
\end{array}
+
\frac{\varepsilon}{2}\cdot
\begin{array}{c c||c|c||c|c||}
& & \multicolumn{2}{c||}{0} & \multicolumn{2}{c||}{1} \\
 & 
& 0 & 1 & 0 & 1 \\ \hline\hline
\multirow{2}{*}{0} & 0 & 0 & 0 & 0 & 0 \\ \cline{2-6}
& 1 & 0 & 1 & 0 & 1 \\ \hline\hline
\multirow{2}{*}{1} & 0 & 0 & 1 & 0 & 1 \\ \cline{2-6}
& 1 & 0 & 0 & 0 & 0 \\ \hline\hline
\end{array}
\\ \nonumber
&&+
\frac{\varepsilon}{2}\cdot
\begin{array}{c c||c|c||c|c||}
& & \multicolumn{2}{c||}{0} & \multicolumn{2}{c||}{1} \\
 & 
& 0 & 1 & 0 & 1 \\ \hline\hline
\multirow{2}{*}{0} & 0 & 0 & 0 & 0 & 0 \\ \cline{2-6}
& 1 & 0 & 1 & 1 & 0 \\ \hline\hline
\multirow{2}{*}{1} & 0 & 0 & 0 & 0 & 0 \\ \cline{2-6}
& 1 & 0 & 1 & 1 & 0 \\ \hline\hline
\end{array}
+
\frac{\varepsilon}{2}\cdot
\begin{array}{c c||c|c||c|c||}
& & \multicolumn{2}{c||}{0} & \multicolumn{2}{c||}{1} \\
 & 
& 0 & 1 & 0 & 1 \\ \hline\hline
\multirow{2}{*}{0} & 0 & 0 & 0 & 0 & 0 \\ \cline{2-6}
& 1 & 0 & 1 & 0 & 1 \\ \hline\hline
\multirow{2}{*}{1} & 0 & 0 & 0 & 0 & 0 \\ \cline{2-6}
& 1 & 0 & 1 & 0 & 1 \\ \hline\hline
\end{array}
\\ \nonumber
&&+
(1-4\varepsilon)\cdot
\begin{array}{c c||c|c||c|c||}
& & \multicolumn{2}{c||}{0} & \multicolumn{2}{c||}{1} \\
 & 
& 0 & 1 & 0 & 1 \\ \hline\hline
\multirow{2}{*}{0} & 0 & \slantfrac{1}{2} & 0 & \slantfrac{1}{2} & 0 \\ \cline{2-6}
& 1 & 0 & \slantfrac{1}{2} & 0 & \slantfrac{1}{2} \\ \hline\hline
\multirow{2}{*}{1} & 0 & \slantfrac{1}{2} & 0 & 0 & \slantfrac{1}{2} \\ \cline{2-6}
& 1 & 0 & \slantfrac{1}{2} & \slantfrac{1}{2} & 0 \\ \hline\hline
\end{array} 
\end{eqnarray}
\end{small}
This shows that the local part is at least $4\varepsilon$. In fact, the
local part cannot be found by greedily subtracting local deterministic strategies,
but must be optimized using a {\em linear-programming technique}. By
Lemma~\ref{lemma:can_occur_with_p}, we can write any non-signalling probability
distribution as
\begin{equation}
\nonumber
 P_{XY|UV}=\sum_i{p_i\cdot\pld{i}}+\left(1-\sum_i p_i\right)\cdot\pnse{}\ ,
\end{equation}
where $\pld{i}$ are the different local-deterministic strategies fixed by the
input and output size. Together with the definition of the local part, this
implies the following.

\begin{lemma}\label{lemma:lp}
The local part is the optimal value of the following linear program:
\begin{eqnarray}
\nonumber \max: && \sum_i p_i \\
\nonumber \st & &\sum_i p_i\cdot \pld{i}(x,y,u,v)\leq P_{XY|UV}(x,y,u,v)\\
\nonumber && p_i \geq 0 \ .
\end{eqnarray}
\end{lemma}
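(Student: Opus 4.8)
The plan is to prove the two inequalities ``local part $\geq$ LP optimum'' and ``local part $\leq$ LP optimum'' separately, in each case translating between a feasible point of the linear program and an Elitzur--Popescu--Rohrlich decomposition by means of Lemma~\ref{lemma:can_occur_with_p} and the fact that a distribution is local precisely when it is a convex combination of the (finitely many) local-deterministic strategies $\pld{i}$.

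First, for ``local part $\geq$ LP optimum'': I would take an arbitrary feasible point $(p_i)_i$ of the program, set $p:=\sum_i p_i$, and (assuming $p>0$; the case $p=0$ is trivial because the local part is always non-negative) define $\ploc:=\sum_i (p_i/p)\,\pld{i}$. This is a convex combination of local-deterministic distributions, hence local and in particular non-signalling. The feasibility constraint is exactly $p\cdot\ploc(x,y,u,v)=\sum_i p_i\,\pld{i}(x,y,u,v)\leq P_{XY|UV}(x,y,u,v)$ for all $x,y,u,v$, so Lemma~\ref{lemma:can_occur_with_p} applied with $\pns{1}=\ploc$ produces a non-signalling $\pnse$ with $P_{XY|UV}=p\cdot\ploc+(1-p)\cdot\pnse$. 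By the definition of the local part this forces the local part to be at least $p$; taking the supremum over all feasible points yields the claim. One small check belongs here: the program as written has no explicit constraint $\sum_i p_i\leq 1$, but fixing an input pair $(u,v)$ and summing the feasibility inequality over all $x,y$ gives $\sum_i p_i=\sum_i p_i\sum_{x,y}\pld{i}(x,y,u,v)\leq\sum_{x,y}P_{XY|UV}(x,y,u,v)=1$, so $p$ is a legitimate convex weight.

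Conversely, for ``local part $\leq$ LP optimum'': let $p^\star$ be the local part, so $P_{XY|UV}=p^\star\cdot\ploc+(1-p^\star)\cdot\pnse$ with $\ploc$ local and $\pnse$ non-signalling. Writing $\ploc=\sum_i q_i\,\pld{i}$ with $q_i\geq 0$ and $\sum_i q_i=1$, put $p_i:=p^\star q_i$. Then $p_i\geq 0$, $\sum_i p_i=p^\star$, and $\sum_i p_i\,\pld{i}=p^\star\cdot\ploc=P_{XY|UV}-(1-p^\star)\cdot\pnse\leq P_{XY|UV}$ pointwise, since $\pnse\geq 0$. Hence $(p_i)_i$ is feasible with objective value $p^\star$, so the optimum is at least $p^\star$. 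Combining the two directions finishes the proof.

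I do not expect a genuine obstacle here: the lemma is essentially a repackaging of Lemma~\ref{lemma:can_occur_with_p} together with the definition of ``local''. The only points requiring any care are the degenerate normalization at $p=0$ when defining $\ploc$, and noting that the constraint $\sum_i p_i\leq 1$ is automatically implied by normalization of the conditional distributions; both are routine.
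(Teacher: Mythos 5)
Your proof is correct and follows essentially the same route the paper sketches: the paper simply invokes Lemma~\ref{lemma:can_occur_with_p} together with the definition of the local part, and your two-inequality argument (normalize a feasible LP point to a local distribution and apply Lemma~\ref{lemma:can_occur_with_p}; conversely, decompose the local piece of an optimal EPR decomposition into its deterministic components to produce a feasible point) is exactly the spelled-out version of that remark. Your side observations --- the degenerate $p=0$ case and the implicit bound $\sum_i p_i\leq 1$ obtained by summing the feasibility constraint over $x,y$ for a fixed input pair --- are correct and make the argument tighter than the paper's one-line justification.
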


This way it can also be shown that the above decomposition of $P^{1,\varepsilon}_{XY|UV}$
is indeed optimal, and that the local part of $P^{1,\varepsilon}_{XY|UV}$ is $4\varepsilon$.

\subsection{Two Isotropic \eprms{}}
Now, consider the system composed of 
two independent isotropic \eprms{}. We can write these two boxes
as one single system taking two input bits $u=(u_1,u_2)$,  $v=(v_1,v_2)$ and giving two output bits $x=(x_1,x_2)$,  $y=(y_1,y_2)$ on each side:
\begin{eqnarray}
\nonumber P^{2,\varepsilon}_{XY|UV}(x,y,u,v)&=&P^{2,\varepsilon}_{XY|UV}({(x_1x_2),(y_1y_2),(u_1u_2),(v_1v_2)})\\
&=&P^{1,\varepsilon}_{XY|UV}(x_1,y_1,u_1,v_1)\cdot P^{1,\varepsilon}_{XY|UV}(x_2,y_2,u_2,v_2)\ .
\end{eqnarray}

Obviously, it is always possible to write each of the two boxes separately as a
combination of one local and one non-local box. This would give a local part of weight
of $(4\varepsilon)^2$. However,  the  local part might be larger and,
actually, it is.
More precisely, it is equal to the local part of
one single isotropic \eprm{}.

\begin{lemma}\label{lemma:two_symm_boxes}
\[
\nonumber P^{2,\varepsilon}_{XY|UV}=(4\varepsilon)\cdot P^{2,local}_{XY|UV}+(1-4\varepsilon)\cdot P^{2,0}_{XY|UV}. 
\]
\end{lemma}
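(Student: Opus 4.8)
The plan is to prove the two halves of the statement separately: that the displayed identity is a legitimate decomposition, into a \emph{local} box of weight $4\varepsilon$ and a non-signalling one, so that the local part of $P^{2,\varepsilon}_{XY|UV}$ is at least $4\varepsilon$; and that no decomposition can do better, so the local part equals $4\varepsilon$, exactly as for a single box.

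\emph{The decomposition.} Write $N:=P^{1,0}_{XY|UV}$ for the perfect \prm{} and $U:=P^{1,1/2}_{XY|UV}$ for the fully uniform --- hence local --- box, so that $P^{2,0}_{XY|UV}=N\otimes N$ is non-signalling, being a tensor product of non-signalling distributions. Define $P^{2,local}_{XY|UV}:=\frac{1}{4\varepsilon}\bigl(P^{2,\varepsilon}_{XY|UV}-(1-4\varepsilon)\,N\otimes N\bigr)$. By linearity of normalisation and of the non-signalling conditions this is automatically a normalised, non-signalling vector, so only two things remain: non-negativity and locality. Non-negativity is an entry-wise estimate of the kind behind Lemma~\ref{lemma:can_occur_with_p}: every entry of $N\otimes N$ is $0$ or $\frac14$, and it is $\frac14$ exactly when the CHSH condition $x_i\oplus y_i=u_iv_i$ holds for both $i=1,2$; at such a coordinate $P^{2,\varepsilon}_{XY|UV}=\bigl(\frac12-\frac{\varepsilon}{2}\bigr)^2=\frac14(1-\varepsilon)^2$, and since $(1-\varepsilon)^2-(1-4\varepsilon)=\varepsilon^2+2\varepsilon\ge 0$ we get $(1-4\varepsilon)\,N\otimes N\le P^{2,\varepsilon}_{XY|UV}$ everywhere, so $P^{2,local}_{XY|UV}\ge 0$.

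\emph{Locality is the main obstacle.} Using $P^{1,\varepsilon}_{XY|UV}=2\varepsilon\,U+(1-2\varepsilon)\,N$, one has $P^{2,\varepsilon}_{XY|UV}=(2\varepsilon\,U+(1-2\varepsilon)\,N)^{\otimes 2}$; subtracting $(1-4\varepsilon)\,N\otimes N$, using $(1-2\varepsilon)^2-(1-4\varepsilon)=4\varepsilon^2$, and dividing by $4\varepsilon$ yields the closed form
\[
P^{2,local}_{XY|UV}=\varepsilon\,U\otimes U+\frac{1-2\varepsilon}{2}\bigl(U\otimes N+N\otimes U\bigr)+\varepsilon\,N\otimes N .
\]
This is a genuine convex combination, but of boxes that are themselves non-local, so locality is far from manifest; each single-box marginal of $P^{2,local}_{XY|UV}$ equals $\frac12\,U+\frac12\,N$, which is precisely the local box of weight $4\varepsilon$ from the single-\prm{} decomposition and sits right on the classical CHSH boundary ($Pr[X\oplus Y=U\cdot V]=\frac34$). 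What must be shown is that this particular mixture lies inside the local polytope for two-bit inputs and two-bit outputs. I would do this by writing $P^{2,local}_{XY|UV}$ explicitly as a convex combination of local-deterministic boxes $\delta_{x,f(u)}\,\delta_{y,g(v)}$ with $f,g:\{0,1\}^2\to\{0,1\}^2$, keeping the list short via the symmetry exchanging the two boxes and the relabelling symmetries of a \prm{}; equivalently, one solves the linear program of Lemma~\ref{lemma:lp} for $P^{2,\varepsilon}_{XY|UV}$, using this decomposition as a feasible point of value $4\varepsilon$ and a matching Bell-inequality (dual) certificate of the same value. Constructing and checking this local decomposition is where the genuine work of the proof lies.

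\emph{Optimality of $4\varepsilon$.} Finally, let $P^{2,\varepsilon}_{XY|UV}=p\cdot L+(1-p)\cdot M$ be an arbitrary decomposition with $L$ local and $M$ non-signalling. Summing out the outputs of the second box --- the result is independent of that box's inputs by non-signalling --- produces a decomposition $P^{1,\varepsilon}_{XY|UV}=p\cdot L'+(1-p)\cdot M'$, because the single-box marginal of a local box is local and that of a non-signalling box is non-signalling. Hence $p\le 4\varepsilon$, the local part of one isotropic \eprm{}. Combined with the decomposition above, the local part of $P^{2,\varepsilon}_{XY|UV}$ is exactly $4\varepsilon$, with $P^{2,0}_{XY|UV}=P^{1,0}_{XY|UV}\otimes P^{1,0}_{XY|UV}$ as the non-signalling remainder --- which is the assertion of the lemma.
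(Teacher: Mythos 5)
Your setup is correct and your auxiliary observations are all sound, but the proof has a genuine gap right where the real work is. You reduce the lemma to showing that
\[
P^{2,local}_{XY|UV}=\varepsilon\,U\otimes U+\frac{1-2\varepsilon}{2}\bigl(U\otimes N+N\otimes U\bigr)+\varepsilon\,N\otimes N
\]
is local (here $N$ is the perfect PRB and $U$ the uniform box), verify non-negativity and normalisation, and even note the suggestive fact that its single-box marginal is $\frac12 U+\frac12 N$, sitting exactly on the CHSH facet. But then you write that ``constructing and checking this local decomposition is where the genuine work of the proof lies'' --- and stop. That is precisely the content of the lemma. Your closed form makes the claim look far from obvious: of its three summands only $U\otimes U$ is manifestly local, while $U\otimes N$, $N\otimes U$, and $N\otimes N$ all have local part strictly less than $1$ (indeed $N\otimes N$ has local part $0$), so locality of the mixture must come from genuinely non-product deterministic strategies. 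The paper's proof consists exactly of producing these: two representative local-deterministic functions $f,g:\{0,1\}^2\to\{0,1\}^2$, each blown up to $64$ strategies by the depolarisation symmetry of~\cite{masanes}, with weights $\varepsilon/16-\varepsilon^2/8$ and $\varepsilon^2/8$ respectively, summing to $4\varepsilon$. (As you anticipated in your remark on symmetry, these are genuinely non-product maps --- e.g.\ $u_1u_2\mapsto x_1x_2$ with $11\mapsto 01$ and all other inputs to $00$ --- which is why the obvious product-of-single-box decompositions top out at $(4\varepsilon)^2$.) Until you actually exhibit such a list, or a dual certificate paired with a feasible primal point, the lemma is not proved.

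On the positive side, your converse argument --- marginalising out the second box to get $P^{1,\varepsilon}_{XY|UV}=p\cdot L'+(1-p)\cdot M'$, hence $p\le 4\varepsilon$ --- is clean and correct, and is something the paper asserts in the surrounding discussion (``it is equal to the local part of one single isotropic $\varepsilon$-PRB'') without writing out. That part is a genuine addition. But the lower bound, the part the lemma actually states and the paper actually proves, is missing from your proposal.
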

\begin{proof}
{\rm
The local part of weight $4\varepsilon$ consists of $128$ local deterministic strategies, which are the following. 
The first $64$ can be obtained from the following, using depolarization~\cite{masanes}:
\begin{eqnarray}
\nonumber u_1u_2 \rightarrow x_1x_2: && 00 \mapsto 00,\ 
01 \mapsto 00,\ 
10 \mapsto 00,\ 
11 \mapsto 01\\ 
\nonumber 
v_1v_2 \rightarrow y_1y_2: && 00 \mapsto 00,\ 
01 \mapsto 00,\ 
10 \mapsto 10,\ 
11 \mapsto 00
\end{eqnarray}
The weight of each of these strategies is $\ep/16-\ep^2/8$.

The second set of $64$ strategies are the ones equal to the following, again using 
depolarization:
\begin{eqnarray}
\nonumber u_1u_2 \rightarrow x_1x_2: && 00 \mapsto 00,\ 
01 \mapsto 00,\ 
10 \mapsto 00,\ 
11 \mapsto 01\\ 
\nonumber 
v_1v_2 \rightarrow y_1y_2: && 00 \mapsto 00,\ 
01 \mapsto 00,\ 
10 \mapsto 00,\ 
11 \mapsto 10
\end{eqnarray}
The weight of each of those strategies is $\ep^2/8$. Together, this yields a local part of weight $4\ep$.\qqed
}
\end{proof}

Lemma~\ref{lemma:two_symm_boxes}
has certain interesting consequences. It has been observed~\cite{kent}
that non-locality can be used for {\em device-independent QKD}. The principal mechanism is as follows: 
A perfect PR box' outputs, if the system is also non-signalling, must be perfectly unbiased. Similarly,
the outputs of an \eprm\  (as long as it is non-local) cannot be {\em completely\/} determined. 
When this fact is interpreted as being from the point of view of an all-powerful adversary 
Eve that is limited by the non-signalling condition only, it leads to the fact that this adversary 
cannot perfectly know the outputs. Clearly, this can potentially be useful cryptographically,
namely for key agreement the security of which is independent of what quantum systems the devices
manipulate on --- actually, independent even from quantum physics as a whole! Now, protocols
for the {\em  amplification of such ``non-signalling secrecy''\/} have been found (under certain additional conditions)~\cite{masanesc,ec}. 
Note, however, that
Lemma~\ref{lemma:two_symm_boxes} implies that {\em two\/} isotropic \eprm\  are, in the above set-up, 
not stronger than a single one: No more secrecy can be extracted. The reason is that the size of the 
local part corresponds to a lower bound on a possible adversary's knowledge on any function of the boxes' outputs.

Similarly, there is a direct connection to non-locality distillation. 
Do {\em two} isotropic \eprms{} allow for the construction
of an $\ep'$-PRB with smaller error $\varepsilon'<\varepsilon$ by applying a function to its inputs and outputs? Again, the answer is negative
by Lemma~\ref{lemma:two_symm_boxes},
 since a
local probability distribution always remains local even when a function is applied to it.
An even stronger result for two systems was shown directly in~\cite{short}, while
in~\cite{dejan} it was shown that, for any number of systems which are realizable by quantum mechanics, the possibility of distilling isotropic \eprms\ 
is at best very limited, and {\em completely impossible\/} for many values of~$\ep$.

\subsection{$n\geq 3$ Isotropic \eprms{}}
Before we consider the general case, let us take a closer look at the case $n=3$. 
We can look at the problem from a {\em game\/} point of view: The \prm{} can be seen as
a tool which always wins the so-called CHSH game. In this game, Alice and Bob are both
given a random bit, and each of them need to reply with a bit.
They win the game if and only if the XOR of their output bits is equal to the AND of
their inputs. Three \prms{} can now be seen as the same game, of which Alice and Bob are
playing three rounds  in parallel. This allows them to apply a better
strategy than when playing each of the three games independently.

\begin{lemma}\label{lemma:3strat_loses2}
For every local deterministic strategy for three \prms{} there always exist inputs
$u$ and $v$ such that Alice and Bob lose two out of the three rounds of the CHSH game.
\end{lemma}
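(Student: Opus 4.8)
The plan is to argue by contradiction. By the definition of local determinism, a local deterministic strategy for the three boxes is a pair of functions $f,g:\{0,1\}^3\to\{0,1\}^3$, Alice answering $x=f(u)$ and Bob answering $y=g(v)$ on inputs $u,v\in\{0,1\}^3$, and round $i$ is lost on $(u,v)$ exactly when $f_i(u)\oplus g_i(v)\neq u_i v_i$. Suppose, for contradiction, that on \emph{every} input pair at most one of the three rounds is lost. Writing $u\wedge v$ for the bitwise AND, $w(\cdot)$ for Hamming weight and $d(\cdot,\cdot)$ for Hamming distance on $\{0,1\}^3$, this assumption reads
\[
 d\bigl(f(u)\oplus g(v),\;u\wedge v\bigr)\le 1\qquad\text{for all }u,v .
\]

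The key step is to evaluate this at the two extreme inputs of Bob. For $v=000$ one has $u\wedge v=000$, so $w\bigl(f(u)\oplus g(000)\bigr)\le 1$ for all $u$; for $v=111$ one has $u\wedge v=u$, so $w\bigl(f(u)\oplus g(111)\oplus u\bigr)\le 1$ for all $u$. Now I would plug in the single input $u^\ast:=\overline{g(000)\oplus g(111)}$, where $\bar a:=a\oplus 111$ denotes the bitwise complement. With this choice $g(111)\oplus u^\ast=\overline{g(000)}$, so the second inequality at $u=u^\ast$ becomes $d\bigl(f(u^\ast),\overline{g(000)}\bigr)\le 1$, while the first gives $d\bigl(f(u^\ast),g(000)\bigr)\le 1$. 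Since $g(000)$ and its complement lie at distance $3$ in $\{0,1\}^3$, the triangle inequality yields $2\ge d\bigl(f(u^\ast),g(000)\bigr)+d\bigl(f(u^\ast),\overline{g(000)}\bigr)\ge 3$, a contradiction. Concretely, the two input pairs $(u^\ast,000)$ and $(u^\ast,111)$ cannot both have at most one lost round, so on at least one of them Alice and Bob lose at least two of the three rounds.

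I expect the real difficulty to lie in spotting this argument rather than in executing it. The first thing one is tempted to do — counting, or bounding the expected number of lost rounds — only shows that each round is lost on at least $16$ of the $64$ input pairs (fix the two input bits of the other rounds on each side and apply the single-CHSH bound that a deterministic strategy loses at least one of the four inputs, cf.\ Proposition~\ref{bell}), hence at least $48$ lost rounds in total; but that is perfectly compatible with the $64$ losses permitted when no pair loses more than one round, so the quantitative approach does not close. What makes the lemma true is that the per-round CHSH obstruction cannot be \emph{located} independently across the three rounds, and the evaluations at $v=000$ and $v=111$ are exactly what expose this. Once those two input pairs are identified, only the one-line triangle-inequality computation above remains, and the same scheme (evaluating Bob's all-zero and all-one inputs and choosing $u^\ast$ accordingly) is the natural candidate for the inductive step toward the general $\Theta(\eps^{\lceil n/2\rceil})$ bound.
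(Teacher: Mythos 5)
Your proof is correct and is essentially the same argument as the paper's: both evaluate the at-most-one-lost-round assumption at the two extreme inputs (all-zeros and all-ones) on one side and then choose the other party's input to force a contradiction. You swap the roles of Alice and Bob, skip the paper's WLOG normalization $x(000)=000$, and package the final clash as a triangle-inequality violation ($d(g(000),\overline{g(000)})=3 > 2$) rather than the paper's Hamming-weight-at-most-one vs.\ at-least-two statement, but your choice $u^\ast=\overline{g(000)\oplus g(111)}$ is precisely the normalized analogue of the paper's choice $v=\overline{x(111)}$, so the mechanism is identical.
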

\begin{proof}
{\rm
By contradiction.
Let, without loss of generality,  $x(000)=000$. In order to lose at most one out
of the three rounds of the game for the case $u=000$, $y(v)$ must have Hamming weight
at most one, i.e.,  $y(v)\in \{000,001,010,100\},\mbox{\ for all\ }  v$.
Now, consider $x(111)=x_1x_2x_3\in \{000,001,010,011,100,101,110,111\}$.
Now, consider $y(\bar{x}_1\bar{x}_2\bar{x}_3)=y_1y_2y_3$ when $u=111$: In order to win
all three rounds for this case, it must hold that $y_i=x_i$ if and only if $x_i=1$, i.e.,
$y(\bar{x}_1\bar{x}_2\bar{x}_3)=111$, since
$y_i=u_i\cdot v_i\oplus x_i=v_i\oplus x_i=\bar{x}_i\oplus x_i=1$.
Thus, in order to win at least two rounds,
$y(\bar{x}_1\bar{x}_2\bar{x}_3)$ must have Hamming weight at least two. This
contradicts the fact that $y(v)$ must have Hamming weight at most one.\qqed
}
\end{proof}

\begin{lemma}\label{lemma:lds_maxweight}
 Every local deterministic strategy for three \eprms{} can have weight  at most
 $({\varepsilon}/{2})^2(1/2-\varepsilon/2)$.
\end{lemma}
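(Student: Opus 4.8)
The plan is to read off the maximal possible weight of a single local deterministic strategy directly from the feasibility condition of Lemma~\ref{lemma:can_occur_with_p} (equivalently, from the linear program of Lemma~\ref{lemma:lp}), and then to bound the resulting quantity using the combinatorial fact established in Lemma~\ref{lemma:3strat_loses2}. Concretely, I would fix an arbitrary local deterministic strategy $\plde=\delta_{x,f(u)}\cdot\delta_{y,g(v)}$ for the three-box system $P^{3,\ep}_{XY|UV}=\prod_{j=1}^{3}P^{1,\ep}_{XY|UV}(x_j,y_j,u_j,v_j)$, where $f,g:\{0,1\}^3\to\{0,1\}^3$. By Lemma~\ref{lemma:can_occur_with_p}, such a strategy can occur with weight $p$ in a convex decomposition of $P^{3,\ep}_{XY|UV}$ if and only if $p\cdot\plde(x,y,u,v)\le P^{3,\ep}_{XY|UV}(x,y,u,v)$ for all $x,y,u,v$; since $\plde$ is supported exactly on the tuples $(f(u),g(v),u,v)$, this is equivalent to
\[
 p\ \le\ \min_{u,v}\,P^{3,\ep}_{XY|UV}\bigl(f(u),g(v),u,v\bigr)\ .
\]

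Next I would evaluate this minimum using the explicit single-box table~(\ref{1box_unbiased_table}): the factor $P^{1,\ep}_{XY|UV}(x_j,y_j,u_j,v_j)$ equals $1/2-\ep/2$ when the $j$-th coordinate satisfies the CHSH condition $x_j\oplus y_j=u_j\cdot v_j$, and equals $\ep/2$ when it does not. Hence, if on some input pair $(u,v)$ the strategy loses $k$ of the three CHSH rounds, then $P^{3,\ep}_{XY|UV}(f(u),g(v),u,v)=(\ep/2)^{k}(1/2-\ep/2)^{3-k}$, and since $\ep\in[0,0.25]$ gives $\ep/2\le 1/2-\ep/2$, this value is non-increasing in $k$. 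By Lemma~\ref{lemma:3strat_loses2} there is an input pair on which at least two of the three rounds are lost, so the minimum above is at most $(\ep/2)^{2}(1/2-\ep/2)$; combining this with the displayed inequality yields $p\le(\ep/2)^{2}(1/2-\ep/2)$ for every local deterministic strategy, which is the assertion.

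I do not expect a genuine obstacle here: the combinatorial core — that every local deterministic strategy for three boxes loses at least two rounds on some input pair — is already supplied by Lemma~\ref{lemma:3strat_loses2}, and the only remaining point needing care is the identification of ``weight of a local deterministic component in a convex decomposition'' with the pointwise minimum of $P^{3,\ep}_{XY|UV}$ over the support of that component, which is exactly what Lemma~\ref{lemma:can_occur_with_p} provides. The one place I would double-check is the monotonicity step in $k$, but it holds throughout the range $\ep\in[0,0.25]$ since $\ep/2\le 1/2-\ep/2$ is equivalent to $\ep\le 1/2$.
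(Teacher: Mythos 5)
Your proof is correct and follows essentially the same route as the paper: it combines Lemma~\ref{lemma:3strat_loses2} with the feasibility criterion of Lemma~\ref{lemma:can_occur_with_p} to bound the weight of any local deterministic strategy by the smallest probability entry on its support. The only difference is that you are slightly more explicit about the monotonicity in the number $k$ of lost rounds, which cleanly handles the case where three (rather than exactly two) rounds are lost — a point the paper leaves implicit.
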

\begin{proof} 
{\rm
 Lemma~\ref{lemma:3strat_loses2} states that for every local deterministic strategy $P_{\mbox{ld}}$,
 there exist $u,v,x,y$ such that $P_{\mbox{ld}}(x,y,u,v)=1$, but
 $P^{3,\varepsilon}_{XY|UV}(x,y,u,v)=({\varepsilon}/{2})^2(1/2-\varepsilon/2)$, since such values
 of $x,y,u,v$ lose two rounds of the CHSH game.
 Together with Lemma~\ref{lemma:can_occur_with_p}, this implies 
 $p\leq ({\varepsilon}/{2})^2(1/2-\varepsilon/2)$.\qqed
}
\end{proof}

\begin{lemma}
The local part of three isotropic \eprms{} is of order $\Theta(\varepsilon^2)$.
\end{lemma}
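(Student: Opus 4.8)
The plan is to prove matching upper and lower bounds of order $\varepsilon^2$.

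For the \textbf{upper bound}, I would combine Lemma~\ref{lemma:lds_maxweight} with a counting argument. By Lemma~\ref{lemma:lp}, the local part is $\sum_i p_i$ over all local deterministic strategies, subject to the constraint $\sum_i p_i \cdot \pld{i}(x,y,u,v) \leq P^{3,\varepsilon}_{XY|UV}(x,y,u,v)$ for all $x,y,u,v$. There are finitely many local deterministic strategies (namely $(2^3)^{2^3} \cdot (2^3)^{2^3}$, i.e., $64^8$, but the relevant count is smaller once one fixes an input-output value), so summing the constraint over all $x,y,u,v$ gives $\sum_i p_i \cdot (\text{number of entries where } \pld{i} = 1) \leq \sum_{x,y,u,v} P^{3,\varepsilon}_{XY|UV}(x,y,u,v) = 64$ (the number of input pairs, since each $P^{1,\varepsilon}$ sums to $1$ over outputs for each input pair, and there are $8 \times 8$ input pairs). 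Since each local deterministic strategy has exactly $64$ entries equal to $1$ (one output pair for each of the $64$ input pairs), this only gives $\sum_i p_i \leq 1$, which is useless. Instead I would use Lemma~\ref{lemma:lds_maxweight} directly: it bounds \emph{each} $p_i$ individually by $(\varepsilon/2)^2(1/2 - \varepsilon/2)$. Combined with the (finite, $\varepsilon$-independent) number $N$ of local deterministic strategies, this gives local part $\leq N \cdot (\varepsilon/2)^2(1/2-\varepsilon/2) = O(\varepsilon^2)$.

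For the \textbf{lower bound}, I would exhibit an explicit convex decomposition of $P^{3,\varepsilon}_{XY|UV}$ into local deterministic strategies with total weight $\Omega(\varepsilon^2)$ plus a non-signalling remainder, and invoke Lemma~\ref{lemma:can_occur_with_p} (or equivalently Lemma~\ref{lemma:lp}) to certify it. The natural candidate is built from the $n=2$ construction in Lemma~\ref{lemma:two_symm_boxes}: take the two boxes that "realize" the optimal decomposition together and treat the third box as pure noise, or more cleverly, use the parallel-repetition strategies that win all three rounds on a large fraction of inputs. Concretely, a local deterministic strategy that wins all three CHSH rounds on, say, $6$ of the $8$ input pairs (and loses one round on the other two) is supported on entries where $P^{3,\varepsilon}$ is at least $\varepsilon/2 \cdot (1/2 - \varepsilon/2)^2$ on the winning pairs and $(\varepsilon/2)^2(1/2-\varepsilon/2)$ on the losing ones — so by Lemma~\ref{lemma:can_occur_with_p} such a strategy can be subtracted with weight $\Theta(\varepsilon^2)$. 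Summing a symmetrized family of such strategies (using depolarization, as in Lemma~\ref{lemma:two_symm_boxes}) and checking that the aggregate constraint $\sum_i p_i \pld{i} \leq P^{3,\varepsilon}$ still holds gives the matching $\Omega(\varepsilon^2)$ lower bound.

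The \textbf{main obstacle} is the lower bound: one must explicitly identify a family of local deterministic strategies and verify the entrywise inequality $\sum_i p_i \pld{i}(x,y,u,v) \leq P^{3,\varepsilon}_{XY|UV}(x,y,u,v)$ simultaneously at all $64 \times 64$ table entries, which is where the linear-programming structure bites and where a naive greedy choice fails (as already noted for $n=1$). I expect this to require either a symmetry reduction via depolarization — collapsing the check to a few orbit representatives — or an appeal to LP duality, exhibiting a dual-feasible point certifying optimality up to constants. The upper bound, by contrast, is essentially immediate once Lemma~\ref{lemma:lds_maxweight} is in hand, since the number of local deterministic strategies is a fixed constant independent of $\varepsilon$.
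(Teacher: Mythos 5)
Your upper bound is exactly the paper's: each $p_i$ is at most $(\varepsilon/2)^2(1/2-\varepsilon/2)$ by Lemma~\ref{lemma:lds_maxweight}, and the number of local deterministic strategies is a fixed constant in $\varepsilon$, so the local part is $O(\varepsilon^2)$.

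For the lower bound you over-complicate, and one of your two proposed routes is broken. The paper's argument is the first one you mention (``take the two boxes together and handle the third separately''), but it requires no entry-by-entry verification at all: writing $P^{2,\varepsilon}=4\varepsilon\,P_L+(1-4\varepsilon)P^{2,0}$ from Lemma~\ref{lemma:two_symm_boxes} and $P^{1,\varepsilon}=4\varepsilon\,P_L'+(1-4\varepsilon)P^{1,0}$ for the third box, the tensor product $P^{3,\varepsilon}=P^{2,\varepsilon}\times P^{1,\varepsilon}$ expands into a convex combination whose term $(4\varepsilon)^2\,P_L\times P_L'$ is local, and the remainder is non-signalling; since this is already a legal convex decomposition, Lemma~\ref{lemma:can_occur_with_p} is satisfied automatically and the local part is at least $(4\varepsilon)^2$. (Your phrase ``treat the third box as pure noise'' is misleading --- the third box still contributes its own factor of $4\varepsilon$ of local weight, giving $\varepsilon^2$ rather than $\varepsilon$.) Your ``main obstacle'' paragraph, worrying about verifying the inequality at all $64\times 64$ entries and invoking LP duality, is therefore unnecessary here.

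Your ``more cleverly'' alternative does not work as stated: you posit a local deterministic strategy that ``wins all three CHSH rounds on $6$ of the $8$ input pairs and loses one round on the other two,'' but Lemma~\ref{lemma:3strat_loses2} says every local deterministic strategy for three boxes must lose at least two rounds on some input, so no strategy of the kind you describe exists. The probability values you assign are also inconsistent: winning all three rounds yields entries of size $(1/2-\varepsilon/2)^3$, losing one round yields $(\varepsilon/2)(1/2-\varepsilon/2)^2$, and losing two yields $(\varepsilon/2)^2(1/2-\varepsilon/2)$, so the thresholds you list do not correspond to the situations you name. The final weight scale $\Theta(\varepsilon^2)$ you arrive at is right, but only because it is forced by Lemma~\ref{lemma:lds_maxweight}, not because the described strategy exists.
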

\begin{proof} 
{\rm
The local part is in $\Omega(\varepsilon^2)$ because the combination of a common
strategy for the first two boxes and a separate strategy for the third box
leads to a local strategy of weight $(4\varepsilon)^2$.
On the other hand, the local part
cannot be larger: The set of strategies which lose at most two rounds of the
CHSH game does not depend on $\varepsilon$  --- hence,
the number of involved local deterministic strategies is 
constant, say  $d$. Thus,  Lemma~\ref{lemma:lds_maxweight}
implies that 
 the local part
is at most $d\cdot ({\varepsilon}/{2})^2(1/2-\varepsilon/2)+O(\varepsilon^3)=O(\varepsilon^2)$.\qqed
}
\end{proof}

\begin{lemma}\label{lemma:n_strat_loses_half}
For every local deterministic strategy for $n$ \prms{}, there always exist inputs $u$ and
$v$ such that Alice and Bob lose at least half of the $n$ rounds of the CHSH game.
\end{lemma}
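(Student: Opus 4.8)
The plan is to generalize the argument behind Lemma~\ref{lemma:3strat_loses2}: rather than arguing about a generic pair of inputs, I would only ever look at Alice's two \emph{constant} inputs $u=0^n$ (all zeros) and $u=1^n$ (all ones), together with a single, carefully chosen input $v^{*}$ for Bob, and show that one of the two resulting input pairs loses at least $\lceil n/2\rceil$ rounds. Fix a local deterministic strategy, given by functions $x,y\colon\{0,1\}^n\to\{0,1\}^n$, where Alice outputs $x(u)$ on input $u$ and Bob outputs $y(v)$ on input $v$ (recall that by the definition of local determinism $x$ depends only on $u$ and $y$ only on $v$). On inputs $(u,v)$, round $i$ is won exactly when $x_i(u)\oplus y_i(v)=u_i\cdot v_i$, so the number of lost rounds is the number of coordinates $i$ with $x_i(u)\oplus y_i(v)\neq u_i v_i$. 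For a bitstring $s$ I write $\bar s$ for its bitwise complement.

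First I would evaluate the number of lost rounds for Alice's two extreme inputs. Put $a:=x(0^n)$ and $b:=x(1^n)$. For $u=0^n$ we have $u_iv_i=0$ for every $i$, so round $i$ is won iff $y_i(v)=a_i$; hence the number of rounds lost on $(0^n,v)$ equals the Hamming distance between $y(v)$ and $a$. For $u=1^n$ we have $u_iv_i=v_i$, so round $i$ is won iff $y_i(v)=b_i\oplus v_i$; hence the number of rounds lost on $(1^n,v)$ equals the Hamming distance between $y(v)$ and $b\oplus v$.

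Next I would make the choice $v^{*}:=b\oplus\bar a$, so that $b\oplus v^{*}=b\oplus b\oplus\bar a=\bar a$. With this choice the two counts above, evaluated at $v=v^{*}$, become the Hamming distance between $y(v^{*})$ and $a$, and the Hamming distance between $y(v^{*})$ and $\bar a$, respectively. Since in each coordinate a bitstring agrees with exactly one of $a$ and $\bar a$, these two distances sum to $n$; consequently at least one of them is $\geq\lceil n/2\rceil$. If it is the first, then $(u,v)=(0^n,v^{*})$ loses at least $\lceil n/2\rceil$ rounds; otherwise $(u,v)=(1^n,v^{*})$ does. This is exactly the claimed statement, and for $n=3$ it recovers Lemma~\ref{lemma:3strat_loses2}.

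The one non-routine step — the crux rather than a genuine obstacle — is guessing that Alice's two constant inputs $0^n,1^n$ together with a single, complementation-based choice of Bob's input already suffice; once that is seen, everything reduces to the bookkeeping of the CHSH condition on those inputs, plus the trivial observation that two nonnegative integers summing to $n$ have maximum at least $\lceil n/2\rceil$ (which is how I read ``at least half of the $n$ rounds'').
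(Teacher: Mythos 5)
Your proof is correct and is essentially the paper's argument, streamlined. The paper fixes $x(0^n)=0^n$ without loss of generality and argues by contradiction, bounding the Hamming weight of $y(\bar{x}(1^n))$ above by $k$ (from $u=0^n$) and below by $n-k$ (from $u=1^n$); you dispense with the normalization and the contradiction by noting directly that the two lost-round counts are the Hamming distances from $y(v^{*})$ to the complementary strings $a$ and $\bar a$, which sum to $n$.
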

\begin{proof} 
{\rm
This proof is by contradiction, and a direct generalization of the proof of Lemma~\ref{lemma:3strat_loses2}.
Let, without loss of generality,  $x(0\dots0)=0\dots0$. In order to lose at most
$k(<n)$ out of the $n$ rounds of the game for the case $u=0\dots0$, $y(v)$ must have Hamming
weight at most $k$ (independently of $v$).
Now, let $x(1\dots1)=x_i$, and consider $y(\bar{x}_i)$: In order to win all $n$ rounds for
the case $u=1\dots1$, $y(\bar{x}_i)$ must be equal to  $x(1\dots1)$ exactly at the
positions where $x_i=1$, i.e., $y(\bar{x}_i)=1\dots1$. Thus, in order to lose at
most $k$ rounds, $y(\bar{x}_i)$ must have Hamming weight at least $n-k$. Since $k<n/2$,
this contradicts the fact that $y(v)$ must have Hamming weight at most $k$.\qqed
}
\end{proof} 

\begin{theorem}\label{theorem:local_part_order_n_half}
 The local part of $n$ isotropic \eprms{} is of order
 $\Theta(\varepsilon^{\lceil \frac{n}{2} \rceil})$.
\end{theorem}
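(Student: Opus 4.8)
## Proof Proposal

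The plan is to prove matching upper and lower bounds, following the pattern already established for $n=3$. The lower bound, showing the local part is in $\Omega(\varepsilon^{\lceil n/2\rceil})$, comes from an explicit construction: pair up the $n$ boxes into $\lfloor n/2\rfloor$ pairs (plus one leftover box if $n$ is odd), apply Lemma~\ref{lemma:two_symm_boxes} to each pair to extract a local part of weight $4\varepsilon$ per pair, and treat any leftover box by its single-box decomposition with local part $4\varepsilon$. Since the boxes are independent, the product of these local pieces is itself a local deterministic mixture, and its total weight is $(4\varepsilon)^{\lceil n/2\rceil}$, which is $\Omega(\varepsilon^{\lceil n/2\rceil})$. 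One should note that strictly speaking this requires checking that a product of local systems is local, which is immediate since a product of local deterministic strategies is again local deterministic.

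For the upper bound, I would mirror the $n=3$ argument exactly. Lemma~\ref{lemma:n_strat_loses_half} guarantees that every local deterministic strategy for $n$ PRBs loses at least $\lceil n/2\rceil$ of the $n$ rounds on some input pair $(u,v)$ (``at least half'' with $n$ rounds means at least $\lceil n/2\rceil$ losses). On the corresponding entry $(x,y,u,v)$, the value of $P^{n,\varepsilon}_{XY|UV}$ is a product of $n$ single-box entries, of which at least $\lceil n/2\rceil$ are the ``losing'' value $\varepsilon/2$ and the rest are at most $1/2-\varepsilon/2 \le 1/2$; hence that entry is at most $(\varepsilon/2)^{\lceil n/2\rceil}\cdot(1/2)^{\lfloor n/2\rfloor} = O(\varepsilon^{\lceil n/2\rceil})$. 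By Lemma~\ref{lemma:can_occur_with_p}, any local deterministic strategy can appear with weight at most this value. The number of local deterministic strategies on the fixed input/output alphabet ($2^n$ inputs, $2^n$ outputs on each side) is a finite constant independent of $\varepsilon$, so summing over all of them and invoking Lemma~\ref{lemma:lp} gives that the local part is at most (constant)$\cdot O(\varepsilon^{\lceil n/2\rceil}) = O(\varepsilon^{\lceil n/2\rceil})$.

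Combining the two bounds yields local part $\Theta(\varepsilon^{\lceil n/2\rceil})$, as claimed. The main subtlety — and the step I would be most careful about — is the upper-bound counting argument: one must be sure that the relevant bound holds for \emph{every} local deterministic strategy simultaneously (not just a cleverly chosen one), which is exactly what Lemma~\ref{lemma:n_strat_loses_half} delivers, and that the number of such strategies, while exponential in $n$, is a fixed constant once $n$ is fixed, so it does not interfere with the order in $\varepsilon$. A secondary point worth a sentence is parity: the lower-bound pairing construction and the upper-bound ``lose at least half'' statement must both produce the same exponent $\lceil n/2\rceil$ in the odd case, which they do — the leftover box in the construction and the rounding-up in the loss count match.
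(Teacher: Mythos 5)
Your proposal is correct and follows essentially the same route as the paper: the lower bound via pairing the boxes and invoking Lemma~\ref{lemma:two_symm_boxes}, and the upper bound via Lemma~\ref{lemma:n_strat_loses_half} together with Lemma~\ref{lemma:can_occur_with_p} and the fact that the number of local deterministic strategies is a constant once $n$ is fixed. You have merely made explicit a few details (the exact entry bound $(\varepsilon/2)^{\lceil n/2\rceil}(1/2)^{\lfloor n/2\rfloor}$ and the parity bookkeeping) that the paper leaves terse.
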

\begin{proof} 
{\rm
 First, the local part is in $\Omega(\varepsilon^{\lceil \frac{n}{2} \rceil})$ since
 a local part of weight
 $(4\varepsilon)^{\lceil \frac{n}{2} \rceil}$ can be achieved by combining the
 \eprms{} in pair, and using Lemma~\ref{lemma:two_symm_boxes}. 
 On the other hand, Lemma~\ref{lemma:n_strat_loses_half} states that it cannot be larger
 than $d\cdot \varepsilon^{\lceil \frac{n}{2} \rceil}+O(\varepsilon^{\lceil \frac{n}{2} \rceil+1})$,
 where $d$ is a constant (in $\varepsilon$).\qqed
}
\end{proof}

\subsection{Explicit Bounds}

We now know the asymptotic order of the local part of $n$ \eprms. We are  interested
in determining the quantity precisely and explicitly.

\begin{lemma}\label{lemma:piece_wise_polynomial}
 The local part of $n$ isotropic \eprms{} as a function $f(\varepsilon)$
 is  continuous.
There exist  a finite partition of
 the function domain into intervals $I=\{I_1,\dots,I_m\}$ and 
  polynomials of degree at most $n$, $p_1(\varepsilon),\dots,p_m(\varepsilon)$, such that
 $f(\varepsilon)=p_i(\varepsilon)$ if $\varepsilon\in I_i$.
\end{lemma}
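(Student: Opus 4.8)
The plan is to exploit the linear-programming characterisation of the local part from Lemma~\ref{lemma:lp} and invoke standard parametric linear programming theory. First I would observe that, for fixed $n$, the relevant quantities are: the finite list of local deterministic strategies $\pld{i}$ (there are finitely many, determined by the input/output alphabet sizes, and independent of $\varepsilon$), and the constraint vector $P^{n,\varepsilon}_{XY|UV}(x,y,u,v)$, whose entries are products of $n$ numbers each of the form $\frac12-\frac\varepsilon2$ or $\frac\varepsilon2$; hence every entry of the right-hand side of the LP is a polynomial in $\varepsilon$ of degree at most $n$, while the coefficient matrix (the $\pld{i}(x,y,u,v)$) and the objective vector are constant in $\varepsilon$. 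So $f(\varepsilon)$ is the optimal value of a linear program in which only the right-hand side $b(\varepsilon)$ varies, and it varies polynomially.

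The second step is the parametric-LP argument. For a linear program $\max\{c^\top p : Ap\le b,\ p\ge 0\}$ with fixed $A,c$, the optimal value is the pointwise maximum, over the finitely many basic feasible solutions / vertices of the feasible polyhedron, of the value attained there. Classically one argues via the dual: the dual feasible region $\{z : A^\top z\ge c,\ z\ge0\}$ does not depend on $b$ at all, so it has a fixed finite vertex set $z^{(1)},\dots,z^{(N)}$, and by strong duality $f(\varepsilon)=\min_j (z^{(j)})^\top b(\varepsilon)$ whenever the primal is feasible and bounded (which it is here: $p=0$ is feasible and $\sum_i p_i\le 1$ is forced since $P^{n,\varepsilon}$ is normalised). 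Each $(z^{(j)})^\top b(\varepsilon)$ is a polynomial in $\varepsilon$ of degree at most $n$, being a fixed linear combination of the polynomial entries of $b(\varepsilon)$. Thus $f$ is a pointwise minimum of finitely many degree-$\le n$ polynomials on $[0,1/4]$.

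From this the statement follows quickly. A pointwise minimum of finitely many continuous functions is continuous, giving continuity of $f$. For the piecewise-polynomial structure, partition $[0,1/4]$ according to which polynomial $(z^{(j)})^\top b(\varepsilon)$ is the active (minimal) one; the set where polynomial $j$ is $\le$ polynomial $k$ is defined by $\bigl((z^{(j)})-(z^{(k)})\bigr)^\top b(\varepsilon)\le 0$, i.e.\ by the sign of a single polynomial, which changes sign at finitely many points. Intersecting over all pairs yields a finite partition of $[0,1/4]$ into intervals (the common refinement by the roots of all the pairwise-difference polynomials), and on each such interval a single $j$ is minimal throughout, so $f=p_j$ there with $\deg p_j\le n$. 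Taking $I=\{I_1,\dots,I_m\}$ to be these intervals and $p_i$ the corresponding polynomials completes the proof.

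The main obstacle — really the only subtle point — is making the parametric-LP step rigorous rather than folklore: one must confirm the primal is feasible and bounded for every $\varepsilon$ in the domain (so strong duality applies and the $\min_j$ formula is valid everywhere, with no degenerate regime where the value jumps), and one must make sure the dual feasible set is genuinely independent of $\varepsilon$ and has finitely many vertices (it is a polyhedron cut out by the fixed inequalities $A^\top z\ge c$, $z\ge 0$; it is pointed after adding the implicit bounds, or one can work with the bounded LP $\max\{\mathbf 1^\top p:\ \sum_i p_i\le1,\ Ap\le b(\varepsilon),\ p\ge0\}$ whose dual polytope is automatically bounded). Once that is pinned down, everything else — continuity, the degree bound $n$ coming from $\deg b(\varepsilon)\le n$, and the finite interval partition from sign changes of finitely many polynomials — is routine.
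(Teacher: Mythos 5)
Your proof follows the same strategy as the paper's: express the local part as a parametric LP (Lemma~\ref{lemma:lp}) whose right-hand side $b(\varepsilon)$ has polynomial entries of degree at most $n$, pass to the dual whose feasible region is $\varepsilon$-independent, and write $f(\varepsilon)$ as a minimum over the finitely many dual vertices of polynomials of degree at most $n$, from which continuity and the piecewise-polynomial structure follow. You supply a few details the paper leaves implicit (primal feasibility/boundedness so strong duality applies everywhere, a potential pointedness issue for the dual polyhedron, and the derivation of the finite interval partition from sign changes of pairwise-difference polynomials), but the key idea and all the main steps coincide with the paper's proof.
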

\begin{proof}
{\rm
The local part is determined by the solution of a linear program of the form
\begin{eqnarray}
\nonumber \max : && c^Tx \\
\nonumber \st & &A\cdot x \leq b\\
\nonumber &&x \geq 0\ ,
\end{eqnarray}
where the vector $c$ is the all-one vector, the matrix $A$ only contains $0$s  and $1$s,
and $b$ is a vector of polynomials in $\varepsilon$ (and all other elements
do not depend on this parameter). By definition of the dual program, the solution
 of the above program is equal to the solution of a linear program:
\begin{eqnarray}
\nonumber \min : && b^Ty \\
\nonumber \st& &A^T\cdot y \leq c\\
\nonumber &&y \geq 0\ .
\end{eqnarray}
The domain of this linear program is constant (because none of the inequalities depend
on $\varepsilon$) and a convex set (in fact, a polyhedron). We know that the optimum is
necessarily attained in an extremal point --- a vertex of the polyhedron. Every
vertex corresponds to one specific $y$ --- let us call it $y^k$ for the $k$-th vertex.
The solution of the linear program can then be written as
$\min_k(b^Ty^k)=\min_k(\sum_i b_i\cdot y_i^k)$. As the $b_i$'s are all polynomials in
$\varepsilon$ of degree at most $n$ and the $y_i^k$ are constants, $b^Ty^k$ is a linear
combination of polynomials of order $n$ and, therefore, itself a polynomial of at most
order $n$. Hence, the local part is given by the minimum of a finite number of fixed polynomials of 
degree at most $n$.\qqed
}
\end{proof} 

\begin{lemma}\label{lemma:upperbound}
 The local part of $n$ \eprms{} is at most
 $2^{2n}\cdot \sum_{i=\lceil \frac{n}{2}\rceil}^{n}\binom{n}{i}(1-\varepsilon)^{n-i}\varepsilon^i$.
\end{lemma}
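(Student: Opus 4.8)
The plan is to exploit the slack in Lemma~\ref{lemma:n_strat_loses_half} by summing a suitable family of the linear-programming constraints from Lemma~\ref{lemma:lp}. Write a decomposition realizing the local part, $P^{n,\varepsilon}_{XY|UV}=\sum_i p_i\,\pld{i}+(1-p)\,\pnse{}$ with $p=\sum_i p_i$; by Lemma~\ref{lemma:can_occur_with_p} we then have $\sum_i p_i\,\pld{i}(x,y,u,v)\le P^{n,\varepsilon}_{XY|UV}(x,y,u,v)$ for \emph{every} tuple $(x,y,u,v)$. Call a tuple \emph{bad} if, read as a play of the $n$ parallel CHSH games, it loses at least $\lceil n/2\rceil$ rounds, i.e.\ if $x_k\oplus y_k\neq u_k\cdot v_k$ for at least $\lceil n/2\rceil$ of the indices $k$. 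Because the $n$ boxes are independent, and one isotropic box returns a prescribed output pair with probability $\tfrac{1-\varepsilon}{2}$ on a won round and $\tfrac{\varepsilon}{2}$ on a lost round, a tuple losing exactly $j$ rounds satisfies $P^{n,\varepsilon}_{XY|UV}(x,y,u,v)=2^{-n}(1-\varepsilon)^{n-j}\varepsilon^{j}$.

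Now I would sum the above inequality over all bad tuples. On the left, each $\pld{i}=\delta_{x,f_i(u)}\delta_{y,g_i(v)}$ places all of its weight on a single output pair for every input pair, so $\sum_{(x,y,u,v)\text{ bad}}\pld{i}(x,y,u,v)$ is exactly the number of input pairs $(u,v)$ on which strategy $i$ loses at least $\lceil n/2\rceil$ rounds; by Lemma~\ref{lemma:n_strat_loses_half} this count is at least $1$ for \emph{every} $i$. Hence the summed left-hand side is at least $\sum_i p_i=p$. On the right, fix an input pair $(u,v)$ and group the bad output pairs by $z:=x\oplus y$: there are $2^{n}$ pairs $(x,y)$ with a given XOR $z$, and $\binom{n}{j}$ strings $z$ whose Hamming distance to $(u_1\cdot v_1,\dots,u_n\cdot v_n)$ equals $j$, so $\sum_{(x,y)\text{ bad for }(u,v)}P^{n,\varepsilon}_{XY|UV}(x,y,u,v)=\sum_{j=\lceil n/2\rceil}^{n}\binom{n}{j}(1-\varepsilon)^{n-j}\varepsilon^{j}$, with no dependence on $(u,v)$. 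Summing over all $2^{2n}$ input pairs gives $p\le 2^{2n}\sum_{i=\lceil n/2\rceil}^{n}\binom{n}{i}(1-\varepsilon)^{n-i}\varepsilon^{i}$, which is the claim.

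The argument is mostly bookkeeping, so there is no genuinely hard step; the two places to be careful are (i) the identification of the value $P^{n,\varepsilon}_{XY|UV}(x,y,u,v)$ with the number of CHSH rounds lost by the tuple, which uses the product structure of $n$ independent isotropic boxes, and (ii) the evaluation of the right-hand side, where one must check that the per-box normalisation $2^{-n}$ is exactly cancelled by the $2^{n}$ choices of $(x,y)$ with fixed XOR, leaving the single factor $2^{2n}$ that counts input pairs. The conceptual ingredient --- that every local deterministic strategy is bad on \emph{at least one} input pair --- is precisely Lemma~\ref{lemma:n_strat_loses_half}, and it is exactly this ``$\ge 1$'' slack that lets the summed constraints close into the stated bound.
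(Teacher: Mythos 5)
Your proof is correct and follows essentially the same approach as the paper's: sum the linear-programming constraints over the ``bad'' tuples $(x,y,u,v)$ that lose at least $\lceil n/2\rceil$ CHSH rounds, lower-bound the resulting left-hand side by the local part via Lemma~\ref{lemma:n_strat_loses_half}, and evaluate the right-hand side by counting $2^n\binom{n}{j}$ bad outputs per input and $2^{2n}$ inputs. The paper's two-sentence proof leaves most of this bookkeeping implicit; you have simply written it out in full.
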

\begin{proof} 
{\rm
 Lemma \ref{lemma:n_strat_loses_half} states that if we sum over all the probability entries 
in the isotropic \eprm{} we have also counted the weight of every
 local strategy at least once. The entry with probability
 $(\frac{1}{2}-\frac{\varepsilon}{2})^{n-i}(\frac{\varepsilon}{2})^i$ occurs exactly
 $2^n\cdot \binom{n}{i}$ times per input and there are $2^{2n}$ inputs. Note that
 this is approximately equal to $(64\varepsilon)^{n/2}$ for large $n$ and small
 $\varepsilon$.\qqed
}
\end{proof} 

For  small  enough $\varepsilon$, the entry with the lowest probability is always the limiting
one. We can, therefore, approximate the leading coefficient better for  small $\varepsilon$
(i.e., the polynomial $f_1$ from below):

\begin{lemma}
 For small enough $\varepsilon$ the local part is at least
 $2^{n/2}\binom{n}{n/2}(1-\varepsilon)^{n/2}\varepsilon^{n/2}$ for even $n$ and
 $2^{(n+3)/2}\binom{n}{(n+1)/2}(1-\varepsilon)^{(n-1)/2}\varepsilon^{(n+1)/2}$ for odd $n$.
\end{lemma}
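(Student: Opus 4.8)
The plan is to lower-bound the local part by exhibiting an explicit feasible point of the primal linear program of Lemma~\ref{lemma:lp}; since a feasible value of a maximisation program is a lower bound on its optimum, which here is the local part, it is enough to make this value equal to the claimed polynomial. The point will be supported on local deterministic strategies for the $n$ boxes that lose at most $\lceil n/2\rceil$ of the $n$ CHSH rounds on \emph{every} input -- the least possible by Lemma~\ref{lemma:n_strat_loses_half} -- each carrying the largest weight permitted by Lemma~\ref{lemma:can_occur_with_p} once only its worst inputs are taken into account, that is $w:=\big(\frac12-\frac\varepsilon2\big)^{\lfloor n/2\rfloor}\big(\frac\varepsilon2\big)^{\lceil n/2\rceil}$. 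The target value then appears as $N\cdot w$, where $N=\binom{n}{\lceil n/2\rceil}8^{\lceil n/2\rceil}$ is the number of strategies used, because $8^{\lceil n/2\rceil}w=2^{3\lceil n/2\rceil-n}(1-\varepsilon)^{\lfloor n/2\rfloor}\varepsilon^{\lceil n/2\rceil}$, which for even $n$ equals $2^{n/2}(1-\varepsilon)^{n/2}\varepsilon^{n/2}$ and for odd $n$ equals $2^{(n+3)/2}(1-\varepsilon)^{(n-1)/2}\varepsilon^{(n+1)/2}$.

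First I would fix the building blocks: the eight one-box strategies of weight $\varepsilon/2$ in the optimal decomposition of a single isotropic \eprm{} (those winning three of the four CHSH inputs), and eight two-box strategies of the kind appearing in Lemma~\ref{lemma:two_symm_boxes} that lose at most one of the two rounds on every input (e.g. ``$x_1=0,\ x_2=u_1\wedge u_2$'' for Alice and ``$y_1=v_1\wedge\overline{v_2},\ y_2=0$'' for Bob, together with their depolarisations). For even $n$, to each subset $S\subseteq\{1,\dots,n\}$ with $|S|=n/2$ I associate the pairing of the $n$ boxes obtained from the order-preserving bijection between $S$ and its complement, and on each of the $n/2$ pairs I place one of the eight two-box blocks, the eight choices being allowed to depend on which box of the pair lies in $S$; the product over the pairs loses at most one round per pair, hence at most $n/2$ rounds overall, and loses exactly $n/2$ rounds precisely on the $8^{n/2}$ inputs that are bad for every pair. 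This produces $\binom{n}{n/2}8^{n/2}$ strategies, each assigned weight $w$. For odd $n$ I proceed identically with $|S|=(n+1)/2$, pairing the $(n-1)/2$ smallest elements of $S$ with the complement and putting one of the eight one-box blocks on the remaining element, obtaining $\binom{n}{(n+1)/2}8^{(n+1)/2}$ strategies.

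Next I would verify feasibility entry by entry. An entry of $P^{n,\varepsilon}_{XY|UV}$ on which exactly $k$ rounds are lost has value $\big(\frac12-\frac\varepsilon2\big)^{n-k}\big(\frac\varepsilon2\big)^{k}$, and for each input there are exactly $2^{n}\binom{n}{k}$ outputs with that property; also, no strategy of the construction is active on an entry with more than $\lceil n/2\rceil$ lost rounds. For $k=\lceil n/2\rceil$ I would show that the ``bad'' sets of the constructed strategies -- of size $8^{\lfloor n/2\rfloor}$ each -- are pairwise disjoint, hence, their total cardinality being $N\cdot 8^{\lfloor n/2\rfloor}=2^{3n}\binom{n}{\lceil n/2\rceil}$ and this being exactly the number of entries with $\lceil n/2\rceil$ lost rounds, they \emph{partition} that set; consequently exactly one strategy is active at each such entry and its constraint $p_i\le w$ holds with equality. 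For $k<\lceil n/2\rceil$ the entry value exceeds $w$ by the factor $\big(\frac{1-\varepsilon}{\varepsilon}\big)^{\lceil n/2\rceil-k}\ge\frac{1-\varepsilon}{\varepsilon}$, while at most $N$ strategies are active at any one entry, so all remaining constraints hold as soon as $\varepsilon\le(1+N)^{-1}$ -- this is the precise meaning of ``small enough $\varepsilon$''. Summing the weights gives the feasible value $N\cdot w$, which is the claimed expression for both parities.

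The main obstacle is the disjointness (tiling) claim for the worst entries; the value formula, the multiplicities $2^{n}\binom{n}{k}$, and the small-$\varepsilon$ slack are all routine, and the slack cannot repair a failure of the tiling: if two worst entries of distinct strategies coincided, that entry would be over-budgeted by an $\varepsilon$-independent factor and the leading coefficient of the bound would genuinely drop. Carrying out the tiling requires choosing the eight two-box (and eight one-box) blocks precisely enough to track how the $8^{\lfloor n/2\rfloor}$ bad inputs of a product strategy, and the outputs it produces there, depend on $S$ and on the per-pair choices; I expect this to reduce, after fixing conventions, to applying the corresponding two-box (respectively one-box) fact pair by pair -- namely that as a block and its bad input range over their possibilities the eight ``one-loss'' two-box (respectively one-box) outputs are each hit exactly once -- together with injectivity of the assignment of an $n$-box strategy to each subset $S$ and its per-pair block choices.
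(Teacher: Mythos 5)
Your overall strategy is the same as the paper's: exhibit an explicit feasible point of the LP of Lemma~\ref{lemma:lp} built from product strategies that lose at most one CHSH round per pair, and argue that the worst entries (those with $\lceil n/2\rceil$ losses) are covered exactly once so that every constraint is tight. The paper realises this coverage by taking \emph{one} product strategy and depolarising; you instead try to build the collection combinatorially from the $\binom{n}{\lceil n/2\rceil}$ subsets $S$ together with per-pair block choices.

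The gap you flagged as ``the main obstacle'' is, however, not merely unfinished: with your specific construction the tiling claim is false. The set of bad entries of one of your strategies has lost-round pattern a \emph{transversal} of the perfect matching $M_S$ induced by $S$, and the number of pairs $(S,T)$ with $T$ a transversal of $M_S$ is not the same for every pattern $T$. Concretely, take $n=4$. The order-preserving bijections give, out of the six subsets, the matching $(1,2)(3,4)$ four times (from $S\in\{\{1,3\},\{1,4\},\{2,3\},\{2,4\}\}$), the matching $(1,3)(2,4)$ twice (from $S\in\{\{1,2\},\{3,4\}\}$), and the matching $(1,4)(2,3)$ never. Counting how often each of the six size-$2$ patterns appears as a transversal of $M_S$ over all $S$, one gets multiplicities $2,4,6,6,4,2$ for $\{1,2\},\{1,3\},\{1,4\},\{2,3\},\{2,4\},\{3,4\}$ instead of the required uniform value $4$. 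Since for fixed $(M_S,T)$ each of the $8^{n/2}$ block choices covers the same number $4^{n/2}$ of worst entries of pattern $T$, over-representation at the pattern level forces some worst entries of patterns $\{1,4\}$ and $\{2,3\}$ to be hit at least twice, violating $\sum_i p_i\le P^{n,\varepsilon}_{XY|UV}$ there for every $\varepsilon>0$; the slack on non-worst entries cannot help, exactly as you anticipated. The same phenomenon already occurs for $n=3$ (pattern multiplicities $1,2,3$). So the construction as stated does not yield a feasible point, and the bound does not follow.

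What is missing is precisely what the paper's depolarisation buys: a strategy family that is symmetric under relabelling of boxes (in addition to the per-box input/output relabellings of~\cite{masanes}), so that \emph{all} perfect matchings of $\{1,\dots,n\}$ appear with the same multiplicity and every lost-round pattern $T$ is a transversal of the same number of matchings (namely $(n/2)!$ of the $(n-1)!!$ matchings, for even $n$). Your order-preserving pairings break this symmetry and, for $n\ge 3$, even miss some matchings entirely; replacing them by the full orbit under box permutation and per-box depolarisation, as the paper does, is the step you still need.
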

\begin{proof} 
{\rm
 For the case of two \prms{}, we have seen that there exists a local deterministic strategy which for $8$ different
inputs  wins both rounds of the CHSH game, for another $8$ inputs it wins one round and it never
loses both rounds of the CHSH game.
 Taking the product of this strategy gives us a strategy which never loses more than
 $\lceil \frac{n}{2}\rceil $ rounds (independently of the input) and loses exactly
 $\lceil \frac{n}{2}\rceil $ rounds for exactly $8^{\lfloor \frac{n}{2}\rfloor}$ inputs.
 Through depolarization~\cite{masanes} of this strategy we can obtain a local
 strategy, such that each of the entries with probability
 $(\frac{1}{2}-\frac{\varepsilon}{2})^{n-\lceil \frac{n}{2}\rceil }(\frac{\varepsilon}{2})^{\lceil \frac{n}{2}\rceil }$
 is covered the same number of times and if we sum over all the entries with probability
 $(\frac{1}{2}-\frac{\varepsilon}{2})^{n-\lceil \frac{n}{2}\rceil }(\frac{\varepsilon}{2})^{\lceil \frac{n}{2}\rceil }$,
 then every strategy is counted exactly $8^{\lfloor \frac{n}{2}\rfloor}$ times (note that
 for small epsilon, the limiting probability is always the one with the highest degree in
 $\varepsilon$, no matter how the rest of the strategy looks like). Therefore, for low
 $\varepsilon$, these strategies can reach a local part of weight
 \begin{equation}\nonumber
  8^{-\lfloor n/2 \rfloor}\cdot 2^{2n}\cdot 2^n\cdot \binom{n}{\lceil n/2  \rceil} \left(\frac{1}{2}-\frac{\varepsilon}{2}\right)^{\lfloor n/2 \rfloor}\left(\frac{\varepsilon}{2}\right)^{\lceil n/2  \rceil }.
 \end{equation}
 Note that this is approximately $(8\varepsilon)^{\lceil n/2  \rceil}$ for large $n$, as one can see by using 
 the Stirling approximation.\qqed
}
\end{proof}

\section{Maximally Biased \dprms{}}
Consider a \prm{} which fulfills the CHSH condition in three out of the four input-cases with
probability $1-\delta$ and in the fourth case perfectly, and where the output bit $X$ is maximally
biased towards zero. 

\begin{definition}
{\rm
 A \emph{maximally biased \dprm{}} is a bipartite conditional probability distribution
 given by the following probability table.  
\begin{eqnarray}
\label{1box_maximally_biased_table}
\begin{array}{c c||c|c||c|c||}
$\backslashbox{V}{U}$& & \multicolumn{2}{c||}{0} & \multicolumn{2}{c||}{1} \\
 & $\backslashbox{Y}{X}$ & 0 & 1 & 0 & 1 \\ \hline\hline
\multirow{2}{*}{0} & 0 & \frac{1}{2}-\frac{\delta}{2} & 0 & \frac{1}{2}-\frac{\delta}{2} & 0 \\ \cline{2-6}
& 1 & \delta & \frac{1}{2}-\frac{\delta}{2} & \delta & \frac{1}{2}-\frac{\delta}{2} \\ \hline\hline
\multirow{2}{*}{1} & 0 & \frac{1}{2}-\frac{\delta}{2} & 0 & 0 & \frac{1}{2}-\frac{\delta}{2} \\ \cline{2-6}
& 1 & \delta & \frac{1}{2}-\frac{\delta}{2} & \frac{1}{2}+\frac{\delta}{2} & 0 \\ \hline\hline
\end{array}
\end{eqnarray}
}
\end{definition}

The local part of one maximally biased \dprm{} is $3\delta$ which can be
reached by the following decomposition.
\begin{small}
\begin{eqnarray}\label{eq:best_dec_1_biased_box}
\nonumber
\fl P^{1,\delta}_{XY|UV}
&=&
+\delta \cdot
\begin{array}{c c||c|c||c|c||}
& & \multicolumn{2}{c||}{0} & \multicolumn{2}{c||}{1} \\
 & 
& 0 & 1 & 0 & 1 \\ \hline\hline
\multirow{2}{*}{0} & 0 & 1 & 0 & 1 & 0 \\ \cline{2-6}
& 1 & 0 & 0 & 0 & 0 \\ \hline\hline
\multirow{2}{*}{1} & 0 & 0 & 0 & 0 & 0 \\ \cline{2-6}
& 1 & 1 & 0 & 1 & 0 \\ \hline\hline
\end{array}
+
\delta \cdot
\begin{array}{c c||c|c||c|c||}
& & \multicolumn{2}{c||}{0} & \multicolumn{2}{c||}{1} \\
 & 
& 0 & 1 & 0 & 1 \\ \hline\hline
\multirow{2}{*}{0} & 0 & 0 & 0 & 0 & 0 \\ \cline{2-6}
& 1 & 1 & 0 & 0 & 1 \\ \hline\hline
\multirow{2}{*}{1} & 0 & 1 & 0 & 0 & 1 \\ \cline{2-6}
& 1 & 0 & 0 & 0 & 0 \\ \hline\hline
\end{array} \\ \nonumber
&&+
\delta \cdot
\begin{array}{c c||c|c||c|c||}
& & \multicolumn{2}{c||}{0} & \multicolumn{2}{c||}{1} \\
 & 
& 0 & 1 & 0 & 1 \\ \hline\hline
\multirow{2}{*}{0} & 0 & 0 & 0 & 0 & 0 \\ \cline{2-6}
& 1 & 0 & 1 & 1 & 0 \\ \hline\hline
\multirow{2}{*}{1} & 0 & 0 & 0 & 0 & 0 \\ \cline{2-6}
& 1 & 0 & 1 & 1 & 0 \\ \hline\hline
\end{array}
+(1-3\delta)\cdot
\begin{array}{c c||c|c||c|c||}
& & \multicolumn{2}{c||}{0} & \multicolumn{2}{c||}{1} \\
 & 
& 0 & 1 & 0 & 1 \\ \hline\hline
\multirow{2}{*}{0} & 0 & \slantfrac{1}{2} & 0 & \slantfrac{1}{2} & 0 \\ \cline{2-6}
& 1 & 0 & \slantfrac{1}{2} & 0 & \slantfrac{1}{2} \\ \hline\hline
\multirow{2}{*}{1} & 0 & \slantfrac{1}{2} & 0 & 0 & \slantfrac{1}{2} \\ \cline{2-6}
& 1 & 0 & \slantfrac{1}{2} & \slantfrac{1}{2} & 0 \\ \hline\hline
\end{array}
\end{eqnarray}
\end{small}

We will now see that for $n(>1)$ maximally biased \dprms{}, the local part is
$(3\delta)^n$. This value can obviously be reached by decomposing each of
the $n$ individually. 

\begin{lemma}\label{lemma:n_strat_biased_loses_all}
 For every local deterministic strategy for $n$ maximally biased \prms{},
 either the strategy has weight zero or there exist inputs $u,v$ such that
 Alice and Bob lose all the $n$ rounds of the CHSH game.
\end{lemma}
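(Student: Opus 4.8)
The plan is to read the lemma off the zero-pattern of the probability table~(\ref{1box_maximally_biased_table}) together with Lemma~\ref{lemma:can_occur_with_p}. First I would record, for a single box, which output pairs are compatible with a local deterministic strategy of positive weight: for an input $(u_i,v_i)\in\{(0,0),(0,1),(1,0)\}$ the only forbidden output pair is $(x_i,y_i)=(1,0)$, whereas for $(u_i,v_i)=(1,1)$ both $(0,0)$ and $(1,1)$ are forbidden, i.e.\ one must have $x_i\neq y_i$. Since the $n$-box distribution is a product, Lemma~\ref{lemma:can_occur_with_p} says that a strategy $u\mapsto x(u)$, $v\mapsto y(v)$ can have positive weight if and only if, for \emph{every} pair of inputs $u,v\in\{0,1\}^n$ and every coordinate $i$, the pair $(x(u)_i,y(v)_i)$ lies in the allowed set for $(u_i,v_i)$. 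From now on I assume the strategy has positive weight, as otherwise the first alternative of the lemma already holds.

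Next I would argue that, for such a strategy, \emph{losing all $n$ rounds} can happen in essentially one way. A round $i$ with $(u_i,v_i)=(1,1)$ is always won, since $x(u)_i\neq y(v)_i$ is forced; and a lost round with $(u_i,v_i)\neq(1,1)$ forces $x(u)_i\neq y(v)_i$, which, together with the ban on $(1,0)$, leaves only $(x(u)_i,y(v)_i)=(0,1)$. Hence it suffices to produce inputs $u,v$ with $u\wedge v=0\dots0$ (bitwise AND), $x(u)=0\dots0$, and $y(v)=1\dots1$; on those inputs every round is lost.

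To build such $u,v$ I would exploit the all-ones inputs, on which every coordinate falls into the ``$(1,1)$'' case. Setting $S:=\{i:x(1\dots1)_i=1\}$, the input $u=v=1\dots1$ forces $y(1\dots1)=\overline{x(1\dots1)}$, so $y(1\dots1)_i=0$ exactly for $i\in S$. Playing $v=1\dots1$ against an arbitrary $u$, and $u=1\dots1$ against an arbitrary $v$, and using the allowed-set constraints, I would then derive: for $i\in S$ one has $x(u)_i=u_i$ for all $u$ and $y(v)_i=\overline{v_i}$ for all $v$; and for $i\notin S$ one has $x(u)_i=0$ whenever $u_i=1$ and $y(v)_i=1$ whenever $v_i=1$. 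The remaining, and least automatic, step is to control a coordinate $i\notin S$ when its input bit is $0$: testing inputs that are $0$ in coordinate $i$ yields a dichotomy --- either Alice's $i$-th output bit is identically $0$, or Bob's $i$-th output bit is identically $1$. Writing $\{1,\dots,n\}=S\cup\mathcal A\cup\mathcal B$, where $\mathcal A$ collects the ``Alice-trivial'' coordinates outside $S$ and $\mathcal B$ the remaining ones (necessarily ``Bob-trivial''), I would finally take $u^{*}$ to be the $0/1$-string supported exactly on $\mathcal B$ and $v^{*}$ the one supported exactly on $\mathcal A$. Then $u^{*}\wedge v^{*}=0\dots0$ since $\mathcal A$ and $\mathcal B$ are disjoint, and the facts above give $x(u^{*})=0\dots0$ (on $S$ it equals $u^{*}=0$ there, on $\mathcal A$ it is Alice-trivial, on $\mathcal B$ the input bit is $1$) and $y(v^{*})=1\dots1$ (on $S$ it equals $\overline{v^{*}_i}=1$, on $\mathcal A$ the input bit is $1$, on $\mathcal B$ it is Bob-trivial), which is exactly what the second paragraph needs.

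The main obstacle, I expect, is this last routing step: the all-ones inputs pin down the coordinates in $S$ completely but say nothing about a coordinate $i\notin S$ whose input bit is $0$, so one genuinely needs the ``Alice-trivial or Bob-trivial'' dichotomy in order to decide, per coordinate, whether to place its $1$ into $u^{*}$ or into $v^{*}$ without ever creating a $(1,1)$ pair. The rest is bookkeeping with the zero-entries of~(\ref{1box_maximally_biased_table}).
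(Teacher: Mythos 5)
Your proof is correct and follows essentially the same route as the paper's: read off the forbidden output pairs from the zero entries of table~(\ref{1box_maximally_biased_table}), probe with the all-ones inputs to pin down each coordinate's behaviour, and split the remaining coordinates by an ``Alice-trivial or Bob-trivial'' dichotomy (the paper's ``two types'') so that the $1$s in $u^{*}$ and $v^{*}$ never collide. Your unified treatment via $S,\mathcal{A},\mathcal{B}$ is in fact a bit tidier than the paper's: the paper handles the coordinates with $x_i(1\dots1)=1$ (your set $S$) by a recursive reduction and there makes a small bookkeeping slip, asserting $y_i(v_i{=}0)=0$ and that the round is lost at $(u_i,v_i)=(1,0)$, whereas the constraints actually force $y_i(v_i{=}0)=1$, giving $x_i(u)=u_i$, $y_i(v)=\bar v_i$, so the lost input is $(u_i,v_i)=(0,0)$ --- exactly what your choice $u^{*}_i=v^{*}_i=0$ on $S$ produces.
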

\begin{proof}
{\rm
For the maximally biased \prms{}, there is a pair of necessary  conditions for a strategy to
have non-zero weight: 
\begin{eqnarray}
\label{conda} x_i(u_i=1) &\neq& y_i(v_i=1)\\
\label{condb} (x_i(u_i),y_i(v_i)) &\neq& (1,0)\ \mbox{\ for all\ }  u_i,v_i\neq (1,1)
\end{eqnarray}
On the other hand, if both conditions are fulfilled, then the strategy
{\em can\/} have non-zero weight. Now let us try to construct a strategy with weight
greater than zero --- clearly, only these are of interest for finding the local part. 
We  show that the $i$-th answers to the all-zero and all-one input either
completely determine {\em all} answers for the $i$-th round (and we will be able
to reduce to the case of $n-1$ rounds) --- or they are such that there exists
another input where all rounds are lost.

\medskip

{\sc First case:} {$x_i(1...1)=1$}. Because of~(\ref{conda}) we have 
$y_i(v_i=1)=0$  (no matter what the rest of the input is). Similarly,
we must also have $x_i(u_i=1)=1$, independently of the rest of the input. However, because
of~(\ref{condb}), we must have $y_i(v_i=0)=0$ and $x_i(u_i=0)=0$
independently of the rest of the input. Therefore, all outputs of the $i$-th
round are completely determined by the input of the $i$-th round (giving a
product strategy) and, furthermore, the $i$-th round is lost for the input
$(u_i,v_i)=(1,0)$ for all possible combinations of the remaining inputs, and the
problem reduces to the case of $n-1$ \prms{}.

\medskip

{\sc Second case:} {$x_i(1...1)=0$}. Because of~(\ref{conda}), we have
$y_i(v_i=1)=0$ (no matter what the rest of the input is). And in the same way, we
must also have $x_i(u_i=1)=0$, independently of the rest of the input.
We now classify the \prms{} into two types with respect to Alice's output: Those
for which $x_i(u_i=0)=0\ \mbox{\ holds for all\ }  u$, and those for which there exists $u$ such that 
$u_i=0\land x_i(u)=1$. 
Without loss of generality, we assume that the first $j$ of the $n$ \prms{} are
of the first type, and that the remaining $n-j$ \prms{} are of the second type.
The \prms{} of the first type lose the CHSH game for input $(u_i,v_i)=(0,1)$,
independently of the rest of the input. 
The \prms{} of the second type must yield $y_i(v_i=0)=1$ independently of the
rest of the input --- otherwise the strategy has zero weight. However, this means
that this \prm{} always loses the CHSH game for input $(u_i,v_i)=(1,0)$.
Therefore, all CHSH games are lost for input
$(u,v)$ such that $(u_i,v_i)=(\delta_{i>j},\delta_{i\leq j})$.\qqed
}
\end{proof}

\begin{theorem}\label{theorem:local_part_biased_product}
 The local part of $n$ maximally biased \dprms{} is equal to $(3\delta)^{n}$.
\end{theorem}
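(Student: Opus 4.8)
The plan is to prove the two matching bounds $\ge(3\delta)^{n}$ and $\le(3\delta)^{n}$ separately; the first is routine, the second is the substance. For the lower bound I would simply tensor the explicit single-box decomposition displayed above, written compactly as $P^{1,\delta}_{XY|UV}=3\delta\cdot L+(1-3\delta)\cdot R$ with $L$ local and $R$ non-signalling. Expanding $(3\delta\,L+(1-3\delta)\,R)^{\otimes n}$ into the $2^{n}$ summands indexed by subsets $T\subseteq\{1,\dots,n\}$ (the $j$-th tensor factor being $L$ if $j\in T$ and $R$ otherwise), each summand is a valid non-signalling distribution, so separating off the single term $T=\{1,\dots,n\}$ gives $P^{n,\delta}_{XY|UV}=(3\delta)^{n}\,L^{\otimes n}+(1-(3\delta)^{n})\,Q$, where $L^{\otimes n}$ is local and $Q$ is the normalized, non-negative, non-signalling mixture of all remaining terms. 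Hence the local part is at least $(3\delta)^{n}$, as already noted in the text.

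For the upper bound I would use the linear-programming formulation of Lemma~\ref{lemma:lp}: the local part equals the maximum of $\sum_{i}p_{i}$ subject to $p_{i}\ge0$ and $\sum_{i}p_{i}\,\pld{i}(x,y,u,v)\le P^{n,\delta}_{XY|UV}(x,y,u,v)$ for all $(x,y,u,v)$. The first step is to observe that any strategy $\pld{i}$ occurring with $p_{i}>0$ in a feasible solution must satisfy the compatibility conditions~\eqref{conda}--\eqref{condb}: violating one of them would put all of that strategy's weight, for some input, on an entry where $P^{n,\delta}_{XY|UV}$ vanishes, forcing $p_{i}\le0$. For such a strategy Lemma~\ref{lemma:n_strat_biased_loses_all} supplies an input $(u^{\star},v^{\star})$ at which Alice and Bob lose all $n$ CHSH rounds. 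In particular each round then receives an input other than $(1,1)$, so its CHSH requirement is $x_{j}=y_{j}$; losing round $j$ forces $(x_{j},y_{j})\in\{(0,1),(1,0)\}$, and~\eqref{condb} rules out $(1,0)$, so $(x_{j},y_{j})=(0,1)$ for every $j$. Thus the strategy outputs $x(u^{\star})=0\cdots0$ and $y(v^{\star})=1\cdots1$, and reading off the box table $P^{1,\delta}_{XY|UV}(0,1,u^{\star}_{j},v^{\star}_{j})=\delta$ for each $j$, so $P^{n,\delta}_{XY|UV}$ equals exactly $\delta^{n}$ at that entry.

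I would then collect these entries into one set and sum the constraint over it. Put $S:=\{(0\cdots0,\,1\cdots1,\,u,\,v)\;:\;(u_{j},v_{j})\in\{(0,0),(0,1),(1,0)\}\ \mbox{for all}\ j\}$; then $|S|=3^{n}$, $P^{n,\delta}_{XY|UV}$ equals $\delta^{n}$ on every element of $S$, and by the previous paragraph every strategy with $p_{i}>0$ outputs an element of $S$ on its all-losing input, hence $\sum_{e\in S}\pld{i}(e)\ge1$. Summing the LP constraint over $e\in S$ yields $\sum_{i}p_{i}\le\sum_{i}p_{i}\sum_{e\in S}\pld{i}(e)\le\sum_{e\in S}P^{n,\delta}_{XY|UV}(e)=3^{n}\delta^{n}=(3\delta)^{n}$, which together with the lower bound gives the theorem.

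The main obstacle is the middle step: one has to argue that an optimal (or just feasible) LP solution uses only box-compatible strategies, and then, combining Lemma~\ref{lemma:n_strat_biased_loses_all} with~\eqref{condb}, pin the all-losing entry down to one of exactly $3^{n}$ prescribed entries, each of value precisely $\delta^{n}$. It is this exact bookkeeping — rather than a crude $O(\delta^{n})$ estimate with an uncontrolled constant — that makes the count $3^{n}\cdot\delta^{n}$ tight and matches the lower bound on the nose.
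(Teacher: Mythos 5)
Your proof is correct and follows essentially the same approach as the paper: lower bound from product strategies, upper bound by invoking Lemma~\ref{lemma:n_strat_biased_loses_all} to locate, for each strategy with positive LP weight, an all-losing entry lying in the set $S$ of $3^n$ entries of probability $\delta^n$, then summing the LP constraints over $S$. The only difference is presentational; you spell out the bookkeeping (using~(\ref{conda}), (\ref{condb}) to force the losing output to be $(0\cdots0,\,1\cdots1)$ and the losing input coordinates to lie in $\{(0,0),(0,1),(1,0)\}$) that the paper's proof leaves implicit.
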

\begin{proof}
{\rm
 Lemma~\ref{lemma:n_strat_biased_loses_all} shows that every strategy with non-zero
 weight has at least one input for which all rounds are lost. This shows that the sum of probabilities
of inputs and outputs which lose \emph{all} rounds of the CHSH game
 must be larger or equal the local part. There are $3^n$ of these input/output combinations
 --- each with associated probability $\delta^n$. Thus the
 local part is at most $(3\delta)^n$.
 On the other hand, we can reach a local part of $(3\delta)^n$ by using product strategies.
 It follows that the local part equals $(3\delta)^n$.\qqed
}
\end{proof}

\section{Conclusion}
We have demonstrated that the local part of $n$ isotropic \eprms{} is of order
$\Theta(\varepsilon^{\lceil n/2\rceil})$, and that the local part of $n$ maximally
biased, i.e., maximally non-isotropic, 
 \dprms{} is exactly $(3\delta)^n$.
To exactly quantify the local part of $n$ isotropic \eprms{}  remains an open
problem.

\ack{}

This work has been supported by the Swiss National Science Foundation, by an ETHIIRA grant of ETH's research 
commission,
 and by the National Research Foundation and the Ministry of Education, Singapore. The authors thank two anonymous reviewers
for their valuable comments.

\section*{References}

\newpage

\appendix

\section{Further Inspection of the Symmetric Case}\label{sec:further_inspection}

Since this appendix is devoted to symmetric \eprms, for simplicity we write $P^{(\varepsilon)}$ instead of $P^{1,\varepsilon}_{XY|UV}$. As explained in the main text, a known result is
\begin{eqnarray*}
P^{(\varepsilon)}&=&(1-4\varepsilon)\,P^{(0)}\,+\,4\varepsilon\,P^{(1/4)}
\end{eqnarray*}
where $P^{(1/4)}$ is the closest local point to $P^{(0)}$ (actually, $P^{(1/4)}=\demi P^{(0)}+\demi I$ with $I$ the uniform probability distribution $P_{XY|UV}=\frac{1}{4}$). We want to study the non-local part of
\begin{eqnarray*}
\mathbf{P}&=&P^{(\varepsilon)}\times P^{(\varepsilon)}\,=\nonumber\\&=&(1-4\varepsilon)^2\,P^{(0)}\times P^{(0)}\,+\, (4\varepsilon)^2\,P^{(1/4)}\times P^{(1/4)}\nonumber\\&&+\,4\varepsilon(1-4\varepsilon)\,\left[P^{(0)}\times P^{(1/4)}+ P^{(1/4)}\times P^{(0)}\right]\,.
\end{eqnarray*}
We are going to show that $\mathbf{S}\equiv P^{(0)}\times P^{(1/4)}+ P^{(1/4)}\times P^{(0)}=P_{NL}+P_L$. This implies
\begin{eqnarray*}
\mathbf{P}&=&(1-4\varepsilon)\,\mathbf{P_{NL}}\,+\,4\varepsilon\,\mathbf{P_{L}}
\end{eqnarray*}
with $\mathbf{P_{NL}}=(1-4\varepsilon)P^{(0)}\times P^{(0)}+4\varepsilon\,P_{NL}$ and $\mathbf{P_{L}}=(1-4\varepsilon)P_L+4\varepsilon\,P^{(1/4)}\times P^{(1/4)}$; as a consequence, the local part of $\mathbf{P}$ is $4\varepsilon$, just as the local part of the single copy $P^{(\varepsilon)}$.

The most elegant way of finding $P_L$ exploits a symmetry. Indeed, all $P_{XY|UV}$ above the facet $CHSH=2$ can be brought to the form $P^{(\varepsilon)}$ by applying the depolarization procedure $\cal{D}$ defined in Appendix A of \cite{masanes}. For instance, $P^{(1/4)}={\cal{D}}([0\,0; 0\,0])$ where $[0 \,0; 0\, 0]$ is the deterministic probability distribution $P_{XY|UV}=\delta_{X,0}\delta_{Y,0}$ i.e. $X_U=0$ and $Y_V=0$. It is in particular obvious that ${\cal{D}}\times{\cal{D}}[\mathbf{P}]=\mathbf{P}$ and consequently ${\cal{D}}\times{\cal{D}}[\mathbf{S}]=\mathbf{S}$. It is therefore natural to look for $P_L={\cal{D}}\times{\cal{D}}[\mathbf{D}]$ where $\mathbf{D}$ is some deterministic point for four inputs and four outputs. By inspection, one finds
\begin{eqnarray*}
P_L&=&{\cal{D}}\times{\cal{D}}([0 0 0 1; 0 0 2 0])
\end{eqnarray*}
with $\mathbf{D}=[0 0 0 1; 0 0 2 0]$ the deterministic point where the $X_U$ and $Y_V$ are defined as $X_0=X_1=X_2=0, X_3=1, Y_0=Y_1=Y_3=0, Y_2=2$. Finally, since each application of ${\cal{D}}$ defines an orbit of $8$ points, each of $P_L$ and $P^{(1/4)}\times P^{(1/4)}$ is a convex combination of 64 deterministic points; therefore $\mathbf{P_{L}}$ is a convex combination involving 128 deterministic points. The explicit list is given below.

Two remarks to conclude:
\begin{itemize}
\item None of the 64 $4\times 4$ deterministic points, whose mixture gives $P_L$, can be described as a product of two $2\times 2$ deterministic points. For instance, consider Alice in $\mathbf{D}=[0 0 0 1; 0 0 2 0]$, and write both inputs and outputs in binary form: $X_{00}=X_{01}=X_{10}=0\equiv (0,0)$ but $X_{11}=1\equiv(0,1)$. Manifestly, this list cannot be written as $X_{U=uu'}=(x_u, x_{u'})$.

\item One could in principle study the local part of $\mathbf{P_n}=P^{(\varepsilon)}\times...\times P^{(\varepsilon)}$ the product of $n$ symmetric \eprms: indeed,
\begin{eqnarray*}
\mathbf{P_n}&=&\sum_{k=0}^n (4\varepsilon)^k(1-4\varepsilon)^{n-k}\,\mathbf{S_{n,k}}
\end{eqnarray*} with $\mathbf{S_{n,k}}$ the sum of all terms containing $k$ factors $P^{(1/4)}$ and $n-k$ factors $P^{0}$. Obviously, $\mathbf{S_{n,0}}$ is fully non-local and $\mathbf{S_{n,n}}$ is fully local. All the others may contain both a local and a non-local part, satisfying the symmetry ${\cal{D}}\times...\times{\cal{D}}$. Studying the local part of each $\mathbf{S_{n,k}}$ then gives a lower bound on the local part of $\mathbf{P_n}$.  Unfortunately, we have not found an easy way of finding the result. Even for the case $n=3$, the inspection is too heavy; we have evidence that $\mathbf{S_{n,1}}$ should be fully non-local, while $\mathbf{S_{n,2}}$ has a local part.
\end{itemize}

\end{document}